
\documentclass[final,1p,times]{elsarticle}

\usepackage{amssymb}

\usepackage[ruled, vlined, linesnumbered, norelsize]{algorithm2e}
\usepackage{amsthm}
\usepackage{amsfonts}
\usepackage{mathtools}
\usepackage{amsmath,enumitem} 
\usepackage{color,comment}
\usepackage{xcolor}
\usepackage{soul}
\usepackage{ulem}
\usepackage{cancel}
\usepackage{hyperref}
\newcommand{\eqbf}[1]{\boldsymbol{#1}}
\newtheorem{assumption}{Assumption}[section]
\newtheorem{definition}[assumption]{Definition}
\newtheorem{thm}[assumption]{Theorem}
\newtheorem{lemma}[assumption]{Lemma}

\newtheoremstyle{boldremark}
    {\dimexpr\topsep/2\relax} 
    {\dimexpr\topsep/2\relax} 
    {}          
    {}          
    {\bfseries} 
    {.}         
    {.5em}      
    {}          
\theoremstyle{boldremark}
\newtheorem{remark}[assumption]{Remark}

\newcommand{\model}{iGCF}
\usepackage{booktabs}

\newcommand{\eiOnline}{\eqbf{e}_{i}^*} 
\newcommand{\eiTOnline}{\eqbf{e}_{i}^{*\top}}

\newcommand{\euOnline}{\eqbf{e}_{u}} 
\newcommand{\eutOnline}{\eqbf{e}_{u,t}} 

\newcommand{\AlgPriorCov}{\eqbf{\Sigma}_0} 

\newcommand{\Emean}{\eqbf{\Phi}^\ast} 

\newcommand{\AlgPriorCovTau}{\eqbf{\Sigma}_{\tau + 1}} 
\newcommand{\noise}{\sigma_{noise}}

\journal{Artificial Intelligence}

\begin{document}

\begin{frontmatter}

\title{Interactive Graph Convolutional Filtering}

\author[sds]{Jin Zhang}
\ead{jinzhang21@mail.ustc.edu.cn}
\author[sds,cs,state]{Defu Lian}
\ead{liandefu@ustc.edu.cn}
\author[sds,cs]{Hong Xie}
\ead{xiehong2018@foxmail.com}
\author[bupt]{Yawen Li}
\ead{warmly0716@126.com}
\author[sds,cs,state]{Enhong Chen}
\ead{cheneh@ustc.edu.cn}
 \affiliation[sds]{organization={School of Data Science, University of Science and Technology of China},
             city={Hefei},
             postcode={230027},
             country={China}}

 \affiliation[cs]{organization={School of Computer Science and Technology, University of Science and Technology of China},
             city={Hefei},
             postcode={230027},
             country={China}}
 \affiliation[state]{organization={State Key Laboratory of Cognitive Intelligence},
             city={Hefei},
             postcode={230027},
             country={China}}
             
\affiliation[bupt]{organization={School of Economics and Management, Beijing University of Posts and Telecommunications},
             city={Beijing},
             postcode={100876},
             country={China}}

\begin{abstract}
Interactive Recommender Systems (IRS) have been increasingly used in various domains, including personalized article recommendation, social media, and online advertising. However, IRS faces significant challenges in providing accurate recommendations under limited observations, especially in the context of interactive collaborative filtering. 
These problems are exacerbated by the cold start problem and  data sparsity problem. 
Existing Multi-Armed Bandit methods, despite their carefully designed exploration strategies, often struggle to provide satisfactory results in the early stages due to the lack of interaction data. 
Furthermore, these methods are computationally intractable when applied to non-linear models, limiting their applicability. 
To address these challenges, we propose a novel method, the Interactive Graph Convolutional Filtering model. 
Our proposed method extends interactive collaborative filtering into the graph model to enhance the performance of collaborative filtering between users and items. 
We incorporate variational inference techniques to overcome the computational hurdles posed by non-linear models. 
Furthermore, we employ Bayesian meta-learning methods to effectively address the cold-start problem and derive theoretical regret bounds for our proposed method, ensuring a robust performance guarantee. 
Extensive experimental results on three real-world datasets validate our method and demonstrate its superiority over existing baselines.
\end{abstract}

\begin{keyword}
recommender systems \sep
interactive collaborative filtering \sep
graph model \sep
bandit \sep
meta-learning \sep
variational learning

\end{keyword}

\end{frontmatter}

\section{Introduction}
Over the past decade, interactive recommender systems (IRS) have received considerable attention due to their broad applicability in real-world scenarios \cite{zhao2013interactive, zhou2020interactive}, including personalized article recommendation \cite{li2010contextual}, social media \cite{guo2020deep}, and online advertising \cite{du2021exploration, wu2021adversarial}, among others \cite{wang2018online, wang2017factorization}. 
In contrast to traditional recommender systems \cite{koren2009matrix, he2020lightgcn, wang2019neural}, which treat recommendations as a one-step prediction task, IRS approach recommendations as a multi-step decision process. 
At each step, the system presents one or more items to the user and may receive feedback, which then sequentially influences subsequent recommendation decisions.
The system calculates rewards based on the feedback received, with the goal of maximizing cumulative rewards for a finite number of recommendations. 

The key challenge in IRS is to provide accurate suggestions for users under insufficient observations, especially for interactive collaborative filtering \cite{zhao2013interactive}. This context is characterized by the lack of feature representation for users and items, with available information limited to user and item IDs accompanied by user feedbacks for specific items.
Moreover, IRS faces significant problems with the cold start problem and data sparsity \cite{zhao2013interactive, zhou2020interactive, du2021exploration}.
The cold start problem occurs when new users enter the system without any interaction history, making it difficult to generate satisfactory recommendations. 
This problem is particularly pronounced in interactive collaborative filtering, where no additional features are available for these new users or items. 
In addition, data sparsity becomes a significant challenge when dealing with long-tail items. 
In many real-world datasets, item popularity often follows a power-law distribution \cite{lu2012recommender, newman2005power}, meaning that with the exception of popular items at the top of the distribution that receive significant attention, the majority of items have low exposure. 
The data sparsity of some items, i.e., they have only interacted with a small number of users, poses a significant challenge for 
interactive collaborative filtering (ICF) models to effectively learn the parameters of these items.
These items rarely receive enough user interactions, leaving the system with insufficient data to understand their characteristics and effectively recommend them, which affects the model's accuracy in predicting user behavior for items.

Efficient techniques to mitigate the aforementioned challenges require fast and accurate characterization of user profiles in as few interaction rounds as possible, while maintaining a degree of uncertainty in the prediction results. 
This requires a model that is capable of solving an exploration-exploitation (EE) dilemma, where it must balance the trade-off between its prediction results (exploitation) and uncertainty (exploration) in decision-making. 
In this context, bandit methods are particularly well suited to address such issues. 
In existing methods, Multi-Armed Bandit (MAB) methods conceptualize the recommendation task as multi-armed bandits or contextual bandits and address it with carefully designed exploration strategies such as Upper Confidence Bound (UCB) \cite{zhao2013interactive, li2010contextual} and Thompson Sampling (TS) \cite{chapelle2011empirical}.
However, existing bandit methods in ICF that rely on traditional matrix factorization, while providing some mitigations, struggle to address the problems caused by data sparsity.
In addition, when faced with the cold start problem, they have significant difficulty providing satisfactory results, often due to the lack of interaction data available in the early stages.
Furthermore, these methods prove to be computationally intractable when applied to non-linear models, severely limiting their applicability in the context of advanced deep models.

Motivated by the desire to overcome the limitations of existing bandit techniques, we present a novel method \textbf{\underline{I}}nteractive \textbf{\underline{G}}raph \textbf{\underline{C}}onvolutional \textbf{\underline{F}}iltering (\textbf{\model})  that effectively addresses these problems. 
The key features of our proposed methodology include the combination of bandit techniques with state-of-the-art graph neural networks, enabling our model to better exploit the power of collaborative filtering between users and items 
for overcoming the data sparsity problem, thereby significantly improving the model's expressiveness.
To overcome the computational hurdles posed by non-linear models, we also incorporate variational inference techniques to ensure effective computation even in the context of complex probabilistic models.  
Furthermore, we use meta-learning techniques to deal with the cold-start problem.

To summarize, the main contributions of this work are summarized as follows:
\begin{itemize}
    \item We propose a novel Interactive Graph Convolutional Filtering (\model) model, which extends the interactive collaborative filtering into the graph model and addresses the aforementioned shortcomings of existing bandit methods.  

    \item We employ a Bayesian meta-learning method to effectively deal with the cold-start problem, ensuring that our method maintains satisfactory recommendations even in the face of insufficient user-item interaction data.

    \item We derive theoretical regret bounds for our proposed method, providing a robust performance guarantee.
    
    \item We conduct extensive experiments on three real-world datasets to validate our method. The results consistently indicate that our method outperforms the state-of-the-art baselines, confirming the efficacy and applicability of our proposed methods.        
\end{itemize}

The rest of this paper is organized as follows. 
In the next Section \ref{sec:overview}, we present the notation used in this work and the overview of proposed method.
In Section \ref{sec:pre}, we present the pretraining phase of our proposed method, including the modeling approach and effective optimization techniques.
Then, in Section \ref{sec:o}, we discuss the online phase, detailing our online update strategy, operational considerations, and theoretical results. 
These include the derived regret bounds that provide a performance guarantee for our method.
In Section \ref{sec:exp}, we detail the experimental results, demonstrating the effectiveness of our method on three real-world datasets.
We then present a review of related work in Section \ref{sec:works}, discussing existing methods and their relation to our proposed method. 
Finally, in Section \ref{sec:con}, we conclude the paper with a summary.

\section{Model and Design Overview}
\label{sec:overview}

We first present a summary of key notations used throughout this paper.  
Then, we present recommendation model.   
Finally, we provide a comprehensive design overview of iGCF, outlining its key features.

\subsection{Notation}

We use the following notational conventions: 
bold lowercase and uppercase letters for vectors and matrices respectively, such as $\eqbf{a}$ and $\eqbf{A}$, and non-bold letters for scalars or constants, such as $k$ and $C$. The $l_2$ norm of a vector and the $l_2$ operator norm of a matrix are denoted by $\|\eqbf{a}\|$ and $\eqbf{\|A\|}_{\mathrm{op}}$, respectively. The smallest and largest eigenvalues of a matrix $\eqbf{A}$ are denoted by $\lambda_{\min}(\eqbf{A})$ and $\lambda_{\max}(\eqbf{A})$, respectively. The set $\{1, \ldots, n\}$, for any natural number $n$, is denoted by $[n]$. 
$\tilde{\mathcal{O}}$ symbolizes the $\mathcal{O}$ notation with polylogarithmic factors.
Also, some of the notations used in this paper are listed in Table \ref{tab:notation}.

\begin{table}[htb]
	\centering
	\caption{Notations and Definitions}\label{tab:notation}
	\begin{tabular}{l|l}
		\toprule 
		Notation & Definition \\
		\midrule $t$ & The round, $t\in [T] $.\\
		\midrule $M$ & The number of users.  \\
		\midrule $N$ & The number of items. \\
		\midrule $\mathcal{U}$ & The set consisting of $M$ users.  \\
        \midrule $\mathcal{I}$ & The set consisting of $N$ items.   \\
        \midrule $\mathcal{I}_{u,i}$ & The set of candidate items for recommendation to user \(u\) in round \(t\). \\
        \midrule $i_t \in \mathcal{I}_{u,i}$ & The item recommended to user $u$ in round $t$. \\
		\midrule $\eqbf{R} \in \mathbb{R}^{M \times N}$ & Observed rating or interaction matrix. \\
		\midrule $\eqbf{e}_u \in \mathbb{R}^d$ &  A vector representing user $u$. \\
		\midrule $\eqbf{e}_i \in \mathbb{R}^d$ &  A vector representing item $i$. \\  
              
		\bottomrule
	\end{tabular}
\end{table}

\subsection{Model}

Suppose the system has a set $\mathcal{U}$ of $M$ users  and a set $\mathcal{I}$ of $N$ items in the record. 
The user $u \in \mathcal{U}$ 
(or item $i \in \mathcal{I}$) is characterized by a feature or 
embedding vector $\eqbf{e}_u \in \mathbb{R}^d$ 
(or $\eqbf{e}_i \in \mathbb{R}^d$), 
where $d \in \mathbb{N}_+$. 
For the ease presentation, we denote the 
embedding matrix as $\eqbf{E}$, where 
\[
\eqbf{E} 
\triangleq 
[
(\eqbf{e}_u)_{u \in \mathcal{U}} 
\,\,\,\,
(\eqbf{e}_i)_{i \in \mathcal{I}}
].
\]
We consider the Bayesian setting that 
the embedding vector $\eqbf{e}_u$ (or $\eqbf{e}_i$) 
is drawn 
from a prior distribution $\mathcal{N}(\eqbf{\mu}_u, 
\eqbf{\Sigma}_u)$ with unknown 
$\eqbf{\mu}_u$ and $\eqbf{\Sigma}_u$ 
(or $\mathcal{N}(\eqbf{\mu}_i, 
\eqbf{\Sigma}_i)$ with unknown 
$\eqbf{\mu}_i$ and $\eqbf{\Sigma}_i$).

Let $r_{u,i}$ denotes a rating that 
quantifies the preference of user $u \in \mathcal{U}$ 
toward item $i \in \mathcal{I}$. 
The rating $r_{u,i}$ is a random variable.  
We consider two widely used rating models.  
One is the regression model: 
\[
r_{u,i} \sim  \mathcal{N} \left( \eqbf{e}_u^{\top} \eqbf{e}_i,\sigma_{{noise}}^2 \right),
\]
where $\sigma_{noise}^2$ characterizes the 
noise in observation. 
We also call this rating model the continuous 
feedback model.  
Another one is the Bernoulli model: 
\[
r_{u,i} 
\sim 
\text{Ber} (\sigma(\eqbf{e}_u^{\top} \eqbf{e}_i)),
\]
where $\text{Ber}(\cdot)$ denotes a Bernoulli distribution 
and  $ \sigma(x) = 1 / (1+\exp(-x)) $ denotes 
the sigmoid function. 
We also call this rating model the binary feedback model.  

Let $\delta_{u,i}$ denote an indicator such that 
$\delta_{u,i} = 1$ if and only if user 
$u$ interacts with item $i$ in the record, i.e., 
user $u$ assigns a rating to item $i$ in the record. 
We denote the interaction matrix as: 
\[
\eqbf{\Delta} 
\triangleq 
[\delta_{u,i}]_{u \in \mathcal{U}, i\in \mathcal{I}}.  
\]
Let $\eqbf{R} = [R_{u,i}]_{u \in \mathcal{U}, i\in \mathcal{I}}$ denote the feedback matrix in the record, 
where $R_{u,i} = r_{u,i}$ if $\delta_{u,i} =1$, 
otherwise $R_{u,i} = \text{null}$.  
Given the interaction matrix $\eqbf{\Delta}$ and 
the feedback matrix $\eqbf{R}$ in the record, 
our objective is to recommend items to 
users in an online manner. 
We consider the setting that new users 
may join the system. 
Without loss of generality, we focus on one user, 
i.e., user $u$, 
in delivering our method.  
Note that user $u$ can be either an existing user 
in the record or a new user. 
We aim to make $T\in \mathbb{N}_+$ rounds 
of recommendations to user $u$.    
Let $\mathcal{I}_{u,t}$ denote a set of 
candidate items to be recommended in round $t$. 
Each round recommends an item to user $u$.
Let $i_t \in \mathcal{I}_{u,t}$ denote the 
item recommended to user $u$ in round $t$. 
Our objective is to maximize the cumulative reward, 
which is defined as 
$\mathbb{E} [ \sum^T_{t=1} r_{u,i_t} ]$.  
Selecting $i_t$ to maximize the cumulative reward is technically 
nontrivial as both the user embedding vector $\eqbf{e}_u$ 
and item embedding vector $\eqbf{e}_i$ are unknown.  
Furthermore, the user $u$ may be a new user with no 
interaction history at all.  

\subsection{Design Overview of iGCF} 
We design iGCF, which utilize the interaction matrix $\eqbf{\Delta}$,  
the feedback matrix $\eqbf{R}$ and online feedback 
to select $i_t$.  
Overall, iGCF consists of pretraining and online phases, similar to other online methods \cite{zhao2013interactive}. 
\begin{itemize}
\item 
{\bf Pretrain phase of iGCF.} 
In the pretrain phase, we use historical interaction data to learn probability distributions for both users and items through a probabilistic graph-based recommendation model. 
We have developed models tailored for both continuous and binary feedback. 
To efficiently learn the posterior distributions of parameters in complex graph networks, we employ the variational inference approach. 
Specifically, we use the diagonal Gaussian distribution to approximate the posterior and the Monte Carlo method to optimize the variational lower bound.
The mean vectors obtained from the learned parameter distributions are then used in the online phase.

\item 
{\bf Online phase of iGCF.} 
In the online phase, we use the user vectors learned during pretrain to generate a meta-distribution for newly arriving users to ensure a positive initial interaction experience. Then,  using the item vectors obtained from the pretraining, we employ the Bayesian Linear UCB strategy to recommend items to users. The user's distribution is dynamically adjusted based on their real-time interaction data, resulting in a personalized recommendation.

\end{itemize}
We next present the details of the  pretrain phase of iGCF and online phase of iGCF individually.  

\section{Pretrain Phase of iGCF}
\label{sec:pre}

In this phase, we synergize the ideas of Probabilistic Matrix Factorization with LightGCN, a state-of-the-art graph-based recommendation model, to improve the embedding learning.
We first provide a basic overview of LightGCN to lay the foundation for the proposed method described below.
Then for each user and item vector, we place a prior distribution and obtain its posterior distribution through the graph model. 
We present two different modeling approaches: one corresponding to traditional regression and the other to binary classification. 
Given the computational challenges associated with the exact posterior distribution, we employ variational inference techniques, in particular using a diagonal Gaussian distribution to provide a tractable approximation of the posterior, and Monte Carlo sampling techniques for efficient optimization.
Finally, we discuss the possible extension of the proposed method to other graph models, emphasizing the scalability of our method.

\subsection{LightGCN}

We first review some basic elements of LightGCN  
that are helpful for delivering our iGCF pretrain 
method.  
LightGCN \cite{he2020lightgcn} treats both users and items as nodes in a bipartite graph, 
where each link represents an interaction 
between a user and an item.    
By performing multiple aggregations on the neighbors and adopting a streamlined network design, LightGCN achieves state-of-the-art performance in graph-based recommendation systems. 

Let 
$\boldsymbol{e}_u^{(0)}, \forall u \in \mathcal{U}$, and 
$\boldsymbol{e}_i^{(0)}, \forall i \in \mathcal{I}$, 
denote the initial embedding vectors for 
users and items respectively. 
Let $K \in \mathbb{N}_+$ denote the number of propagation layers of LightGCN.    
The graph convolution operation of LightGCN 
can be expressed as:
$$
\begin{aligned}
\eqbf{e}_u^{(k+1)} & =\sum_{i \in \mathcal{N}_u} \frac{1}{\sqrt{\left|\mathcal{N}_u\right|} \sqrt{\left|\mathcal{N}_i\right|}} \eqbf{e}_i^{(k)}, \\
\eqbf{e}_i^{(k+1)} & =\sum_{u \in \mathcal{N}_i} \frac{1}{\sqrt{\left|\mathcal{N}_i\right|} \sqrt{\left|\mathcal{N}_u\right|}} \eqbf{e}_u^{(k)},
\end{aligned}
$$
where 
$k \in \{0, 1, ..., K\}$, 
$\mathcal{N}_u$ denotes the set of items that 
user $u$ interacts with, $\mathcal{N}_i$ denotes the set of users that interact with item $i$.
The final embedding vectors denoted by  $\eqbf{\Bar{e}}_u$ and $\eqbf{\Bar{e}}_i$ 
are obtained through the graph readout operation: 
\[
\eqbf{\Bar{e}}_u=\sum_{k=0}^K \alpha_k \eqbf{e}_u^{(k)}, \quad \eqbf{\Bar{e}}_i=\sum_{k=0}^K \alpha_k \eqbf{e}_i^{(k)},
\]
where $\alpha_k \geq 0$ quantifies the importance of the $k$-th layer embedding in constituting the final embedding.  
It can be treated as a hyper-parameter.  
Let $\hat{r}_{u i}$ denote the predicted rating.  
The inner product of user and item final embedding vectors serves as the predicted rating:
$$
\hat{r}_{u i}=\eqbf{\Bar{e}}_u^{\top} \eqbf{\Bar{e}}_i,
$$
which is used as the ranking score for recommendation.

To improve the readability of this work, 
we review the matrix form of LightGCN. 
Recall the interaction matrix $\eqbf{\Delta}$.  
The adjacency matrix of the user-item graph 
can be expressed as: 
$$
\eqbf{A}=\left(\begin{array}{cc}
\eqbf{0} & \eqbf{\Delta} \\
\eqbf{\Delta}^\top & \eqbf{0}
\end{array}\right).
$$
Let the $k$-th layer embedding matrix 
and final embedding matrix be 
$
\eqbf{E}^{(k)} 
\triangleq 
[(\boldsymbol{e}^{(k)}_u)_{u \in \mathcal{U}} \,\, 
(\boldsymbol{e}^{(k)}_i)_{i \in \mathcal{I}}]
$ 
and 
$
\eqbf{\Bar{E}} 
\triangleq 
[(\eqbf{\Bar{e}}_u)_{u \in \mathcal{U}} \,\, 
(\eqbf{\Bar{e}}_i)_{i \in \mathcal{I}}]
$
respectively. 
The matrix equivalent form of graph convolution operation can be expressed as:
$$
\eqbf{E}^{(k+1)} = \eqbf{E}^{(k)} \left(\eqbf{D}^{-\frac{1}{2}} \eqbf{A} \eqbf{D}^{-\frac{1}{2}}\right),
$$
where $\eqbf{D}$ is a $(|\mathcal{U}|+|\mathcal{I}|) \times 
(|\mathcal{U}|+|\mathcal{I}|)$ diagonal matrix  with entry $D_{j j}$ denoting the number of nonzero entries in the $j$-th row vector of the adjacency matrix $\eqbf{A}$. 
Lastly, by setting 
\begin{equation}
    \eqbf{G} = \alpha_0 \eqbf{I} +\alpha_1 \tilde{\eqbf{A}} +\alpha_2 \tilde{\eqbf{A}}^2 +\ldots+\alpha_K \tilde{\eqbf{A}}^K,
\end{equation}
where $\tilde{\eqbf{A}}=\eqbf{D}^{-\frac{1}{2}} \eqbf{A D}^{-\frac{1}{2}}$ is the symmetrically normalized matrix,
we get the final embedding matrix used for model prediction as:

\begin{equation}
\label{eq:lightgcn_ma}
\begin{aligned}
\eqbf{\Bar{E}} 
& = \eqbf{E}^{(0)} \eqbf{G}. \\ 
\end{aligned}
\end{equation}
For the ease of presentation, 
we partition the column vectors of 
$\eqbf{G}$ into two groups, i.e., 
user group and item group, 
and index them accordingly such that 
\[
\eqbf{G} 
= 
[
(\boldsymbol{g}_u)_{u \in \mathcal{U}} 
\,\,
(\boldsymbol{g}_i)_{i \in \mathcal{I}}
].  
\]
Under this partition of $\eqbf{G}$, 
the formula of the final embedding can be 
rewritten as: 
\begin{equation}
\eqbf{\Bar{e}}_u 
= 
\eqbf{E} \eqbf{g}_u, 
\quad 
\eqbf{\Bar{e}}_i 
=
\eqbf{E} \eqbf{g}_i.  
\label{eq:LGCN:finalEmbG}
\end{equation}
 
\subsection{Interaction Matrix Aware Posterior 
Inference}

We utilize the interaction matrix 
$\eqbf{R}$ to pretrain the model  
borrowing the idea from LightGCN.  
In particular, we treat the 
embedding matrix $\eqbf{E}$ as the initial 
embedding matrix of the LightGCN model.   
We further use the corresponding final embedding vectors as the embedding vectors 
of each user and each item,  
i.e., 
$\eqbf{E} \eqbf{g}_u$ as the embedding vector of 
user $u$ and $\eqbf{E} \eqbf{g}_i$ as the 
embedding vector of item $i$ (following 
Equation (\ref{eq:LGCN:finalEmbG})).  
 
\noindent
{\bf Continuous feedback.} 
Replacing the feature vectors of users 
and items with those captured the interaction 
matrix $\boldsymbol{R}$, 
one can rewrite the continuous feedback model as: 
\begin{equation}
\label{eq::model::reg}
r_{u, i}= \eqbf{g}_u^{\top} 
\eqbf{E}^{\top} \eqbf{E} \eqbf{g}_i + \xi, \quad \xi \sim \mathcal{N}\left(0, \sigma_{noise}^2\right).
\end{equation} 
Note that both the $\eqbf{e}_u$ and $\eqbf{e}_i$ 
are unknown. 
To facilitate fast computation of the posterior, 
we place the following prior on the embedding vectors of users and items: 
\begin{equation}
\label{eq:para_init}
P(\eqbf{E})=\Pi_u \mathcal{N}\left( \eqbf{e}_u ; \eqbf{0}, \sigma_0^2 \boldsymbol{I}_{\boldsymbol{d}}\right) \cdot \Pi_i \mathcal{N}\left(\eqbf{e}_i; \eqbf{0}, \sigma_0^2 \boldsymbol{I}_d\right).     
\end{equation}
One can derive the posterior distribution of $\boldsymbol{E}$ under the above rating model as:
\begin{equation}
\label{eq:prob_loss_sq}
\begin{aligned}
P\left(\eqbf{E} \mid \eqbf{R}; \eqbf{G}\right) & \propto P\left(\eqbf{R} \mid \eqbf{E} ; \eqbf{G}\right) \cdot P\left( \eqbf{E} \right) \\
& \propto \Pi_{(u,i):\delta_{u, i}=1} \mathcal{N}\left(r_{u, i} ; \eqbf{g}_u^{\top} 
\eqbf{E}^{\top} \eqbf{E} \eqbf{g}_i, \sigma_{noise}^2\right) \cdot \Pi_u \mathcal{N}\left(\eqbf{e}_u ; \eqbf{0}, \sigma_0^2 \boldsymbol{I}_{\boldsymbol{d}}\right) 
\cdot \Pi_i \mathcal{N}\left(\eqbf{e}_i ; \eqbf{0}, \sigma_0^2 \boldsymbol{I}_{\boldsymbol{d}}\right), \\
\end{aligned}
\end{equation}
where $\delta_{u,i}$ denotes that user $u$ have interacted with item $i$, otherwise $0$.

\noindent
{\bf Binary feedback.} 
Replacing the feature vectors of users 
and items with those captured the interaction 
matrix $\boldsymbol{R}$, 
one can rewrite the binary feedback model as:   
\begin{equation}
\label{eq::model::cls}
r_{{u}, {i}}=
\text{Ber}
\left(\sigma\left( \eqbf{g}_u^{\top} 
\eqbf{E}^{\top} \eqbf{E} \eqbf{g}_i \right)\right). 
\end{equation}  
Similarly, one can derive the corresponding posterior distribution as: 

\begin{equation}
\label{eq:prob_sigmoid}
\begin{aligned}
P\left(\eqbf{E} \mid \eqbf{R}; \eqbf{G}\right) 
& \propto \Pi_{(u,i):\delta_{u, i}=1} 
\text{Ber}
\left(\sigma\left( 
\eqbf{g}_u^{\top} 
\eqbf{E}^{\top} \eqbf{E} \eqbf{g}_i
\right)\right)
\cdot \Pi_u \mathcal{N}\left(\eqbf{e}_u ; \eqbf{0}, \sigma_0^2 \boldsymbol{I}_{\boldsymbol{d}}\right) 
\cdot \Pi_i \mathcal{N}\left(\eqbf{e}_i ; \eqbf{0}, \sigma_0^2 \boldsymbol{I}_{\boldsymbol{d}}\right). \\
\end{aligned}
\end{equation}

\subsection{Variational Approximation of the Posterior}

Equation \eqref{eq:prob_loss_sq} and \eqref{eq:prob_sigmoid} 
demonstrate that 
that even with commonly used simple feedback models, the posterior distribution of user feature vectors and item feature vectors 
becomes non-Gaussian and computationally intractable due to the aggregation of neighbors in the graph network. 
To address this computational challenge, 
we employ the variational approximation method that uses a normal distribution to approximate the posterior distribution.  

Variational learning aims to find the parameters of a distribution on the feature matrix, denoted by $q(\eqbf{E})$, that minimizes the Kullback-Leibler (KL) divergence with the true Bayesian posterior distribution.  
We use a diagonal Gaussian distribution as the variational distribution: 
\begin{equation}
\label{eq:vi:dis}    
q(\eqbf{E}) = 
\prod_{u \in \mathcal{U}}
q\left(\eqbf{e}_u \right) 
\prod_{i \in \mathcal{I}} 
q\left(\eqbf{e}_i \right),
\end{equation}
where $q\left(\eqbf{e}_u \right)$ and 
$q\left(\eqbf{e}_i \right)$ follow 
Gaussian Distributions: 
\begin{equation}
    \label{eq:va}
q\left(\eqbf{e}_u \right) \sim \mathcal{N}\left(\boldsymbol{\mu}_{u}, \operatorname{Diag}\left(\boldsymbol{s}_{u}\right)\right), 
\quad 
q\left(\eqbf{e}_i \right) \sim \mathcal{N}\left(\boldsymbol{\mu}_{i}, \operatorname{Diag}\left(\boldsymbol{s}_{i}\right)\right).
\end{equation}
The KL divergence of $q$ from the posterior 
distribution $P$ of $\boldsymbol{E}$ can be 
derived as:
\begin{equation}
\label{eq:KL}
KL(q(\eqbf{E}) \| P(\eqbf{E} \mid R)) \propto \mathbb{E}_{\eqbf{E} \sim {q}}\left[\log \frac{q(\eqbf{E})}{P(\eqbf{E})}
- \log P(R|E)
\right].    
\end{equation}
Combining equations \eqref{eq:KL} and \eqref{eq:prob_loss_sq}, 
one can derive the loss function 
for the continuous feedback model as: 
\begin{equation}
    \label{eq::objective::reg}
    \begin{aligned}
\mathbb{E}_{\eqbf{E} \sim {q}}\left[{\mathcal{L}(\eqbf{E})} \right] = &\mathbb{E}_{\eqbf{E} \sim {q}} 
\Biggl[
\sum_{(u,i): \delta_{u i}=1} \frac{\left(r_{u, i} -  \eqbf{g}_u^{\top} \eqbf{E}^{\top}  \eqbf{E} \eqbf{g}_i\right)^2}{2 \sigma_{noise}^2} + \sum_{ u \in \mathcal{U} } 
\left( \frac{1}{2 \sigma_0^2}  
\eqbf{e}_u^{\top} \eqbf{e}_u - \frac{1}{2} \log \det( \operatorname{Diag}(\eqbf{s}_u))\right) \\
& + \sum_{ i \in \mathcal{I} } \left( \frac{1}{2 \sigma_0^2}  \eqbf{e}_i^{\top} \eqbf{e}_i - \frac{1}{2} \log \det( \operatorname{Diag}(\eqbf{s}_i))\right)
\Biggl] + const.
    \end{aligned}
\end{equation}

To effectively optimize the parameters of the variational distribution, 
one can obtain a sample of $\eqbf{e}_u$ 
by sampling from a standard multivariate normal distribution Gaussian distribution, shifting it by a mean 
$\boldsymbol{\mu}_u$, and scaling it by a standard deviation 
$\boldsymbol{s}_u$. 
Besides, to ensure that $\eqbf{s}_u$ is always non-negative, we parameterize the standard deviation pointwise as $\boldsymbol{s}_u = \log(1 + \exp(\boldsymbol{\rho}_u))$, which is  consistent with previous work \cite{blundell2015weight}.
Therefore, the transformation from a sample of parameter-free noise and the variational posterior parameters to obtain a posterior sample $\eqbf{e}_u$ is given by 
\begin{equation}
\label{eq:vi-optim}
\eqbf{e}_u = \boldsymbol{\mu}_u + \log(1 + \exp(\boldsymbol{\rho}_u)) \circ \boldsymbol{\epsilon}_u,
\end{equation}
where $\eqbf{\epsilon}_u \sim \mathcal{N}(\eqbf{0}, \eqbf{I_d})$ and $\circ$ denotes pointwise multiplication. 
One can have similar transformation for 
obtaining samples of $\eqbf{e}_i$.  

For the binary feedback model method \eqref{eq::model::cls}, we can use the same optimization techniques as described above, except that the loss function now takes the form:
\begin{equation}
    \label{eq::objective::cls}
    \begin{aligned}
\mathbb{E}_{\eqbf{E} \sim {q}}\left[{\mathcal{L}(\eqbf{E})} \right]  = &\mathbb{E}_{\eqbf{E} \sim {q}} 
\Biggl[
\sum_{(u,i): \delta_{u i}=1} -\log \sigma{\left(\left(2 r_{u, i} - 1\right) \eqbf{g}_u^{\top} \eqbf{E}^{\top} \eqbf{E} \eqbf{g}_i\right)}
+ \sum_{ u \in \mathcal{U} } 
\left( \frac{1}{2 \sigma_0^2}  
\eqbf{e}_u^{\top} \eqbf{e}_u - \frac{1}{2} \log \det( \operatorname{Diag}(\eqbf{s}_u))\right) \\
& + \sum_{ i \in \mathcal{I} } \left( \frac{1}{2 \sigma_0^2}  \eqbf{e}_i^{\top} \eqbf{e}_i - \frac{1}{2} \log \det( \operatorname{Diag}(\eqbf{s}_i))\right)
\Biggl] + const.
    \end{aligned}
\end{equation}
Other modeling approaches are also allowed, following the same optimization technique, requiring only modifications to the objective function.

Let $\boldsymbol{\mu}^\ast_u$, $\boldsymbol{s}^\ast_u$, 
$\boldsymbol{\mu}^\ast_i$ 
and $\boldsymbol{s}^\ast_i$ 
denote the optimal parameters 
obtained through the above optimization 
procedures. 
After the pretraining phase, 
the posterior distribution of user $u$ 
is approximated by the Gaussian distribution 
$
\mathcal{N}(\boldsymbol{\mu}^\ast_u, 
\operatorname{Diag}(\boldsymbol{s}^\ast_u)
)$ 
and the posterior distribution of the feature vector of item $i$ 
is approximated by the Gaussian distribution 
$
\mathcal{N}(\boldsymbol{\mu}^\ast_i, 
\operatorname{Diag}(\boldsymbol{s}^\ast_i)
)$. 
For the ease of presentation, 
we define the posterior mean matrix as 
\begin{equation}
\label{eq:pre_pos_mean}
\boldsymbol{\Phi^\ast} 
\triangleq 
[
(\boldsymbol{\mu}^\ast_u)_{u \in \mathcal{U}}
\,\, 
(\boldsymbol{\mu}^\ast_i)_{i \in \mathcal{I}}].  
\end{equation}

\begin{algorithm}[t]
    \KwIn{Training set $S = \{(u,i): \delta_{ui} = 1\}$, hyper-parameters $\sigma_0, \sigma_{noise}$, loss function $\mathcal{L}$, graph matrix $\eqbf{G}$, learning rate $\eta$.}
    $(\eqbf{\mu_u}, \eqbf{\rho_u}, \eqbf{\mu_i}, \eqbf{\rho_i}) \gets $ initialize all parameters with zeros for all users and items\;
    $\{S_b\}_{b=1}^B \gets$ split $S$ into several batches\;
	\For{b $\gets$ 1 to B}{
         $\eqbf{\epsilon}_u, \eqbf{\epsilon_i} \gets$ sample standard normal vectors from $\mathcal{N}(0, \eqbf{I})$ for each user and item\;
         $\eqbf{E} \gets$ construct embedding matrix using equation \eqref{eq:vi-optim}\;
         $\mathcal{\hat{L}} \gets$ compute loss $\mathcal{L}(\eqbf{E})$ using equation \eqref{eq::objective::reg} or \eqref{eq::objective::cls} on $S_b$\;
         $(\eqbf{\mu_u}, \eqbf{\rho_u}, \eqbf{\mu_i}, \eqbf{\rho_i}) \gets $ optimize parameters by SGD with learning rate $\eta$\;
        }
    $(\boldsymbol{\mu}^\ast_u, \boldsymbol{s}^\ast_u,\boldsymbol{\mu}^\ast_i, \boldsymbol{s}^\ast_i) \gets$  the optimal parameters, obtained by executing lines $2$-$7$ repeatedly until convergence\;
    $\boldsymbol{\Phi^\ast} \gets $  concatenate mean vectors of users and items using equation \eqref{eq:pre_pos_mean} \;
    \KwOut{$\boldsymbol{\Phi^\ast}$.}
 \caption{Pretrain Process of \model }\label{alg:pretrain} 
\end{algorithm}
We define the feature vectors of items under 
the optimal approximate posterior as 
\begin{equation}
\label{eq:pos_i}
\boldsymbol{e}^{\ast}_i 
\triangleq
\boldsymbol{\Phi^\ast} 
\boldsymbol{g}_i.  
\end{equation}
In the online phase of iGCF, 
we fix the feature vector of item $i$ 
to be $\boldsymbol{e}^{\ast}_i$. 
For clarity, we summarize the above process in Algorithm \ref{alg:pretrain}.
 
\subsection{Extending to Other Graph Models}
In this part, we explore the potential for extending our method to alternative graph models. 
Upon closer inspection, we find that the LightGCN model can be fully characterized by a convolutional coefficient matrix $\eqbf{G}$. 
Importantly, during the pretraining phase of our model, the representation and exploitation of the graph model is entirely dependent on this matrix \( \eqbf{G} \). 
This observation suggests that our framework is generalizable to any graph model that can be effectively represented by its respective convolutional coefficient matrix. 
For example, 
in SGCN \cite{wu2019simplifying}, the convolutional coefficient matrix can be represented as 
$$
\eqbf{G} = \left[(\eqbf{D} + \eqbf{I})^{-\frac{1}{2}} (\eqbf{A} + \eqbf{I}) (\eqbf{D} + \eqbf{I})^{-\frac{1}{2}} \right]^K,
$$
and in APPNP \cite{gasteiger2018predict}, the convolutional coefficient matrix can be represented as 
$$
\eqbf{G} = \beta \eqbf{I} + \beta(1 - \beta) \tilde{\eqbf{A}} + \beta(1-\beta)^2 \tilde{\eqbf{A}}^2 +\ldots + \beta(1-\beta)^K \tilde{\eqbf{A}}^K,
$$
where $\beta$ is the teleport probability to control the retaining of starting features in the propagation.
Our method can be easily adapted to above graph models. 
Even if other graph neural network models contain additional linear transformation layers, our approach remains applicable. 
This is provided that we do not consider the parameters of these linear layers as random variables, but rather as parameters to be optimized and include it as part of $\eqbf{G}$. 
Our method is well-suited for graph neural network architectures that lack non-linear activation functions. 
It should be noted that some research \cite{wu2019simplifying,he2020lightgcn} has explored the potential benefits of removing non-linear activation functions in graph models, particularly in the context of recommendation scenarios.

\section{Online Phase of iGCF}
\label{sec:o}
In this section, we first provide a basic overview of the ICF. 
We then discuss online aggregation in the context of newly arriving data. 
To address the challenges of the cold-start problem, we introduce a meta-learning method for rapid user initialization.
Next, we use Bayesian Linear UCB method to recommend items to users based on updated posterior distributions.
Finally, we provide a theoretical analysis of the regret associated with our proposed method, aiming to provide robust performance guarantees.

\subsection{ICF}
We will first provide some necessary background on ICF \cite{zhao2013interactive} which is helpful in understanding our iGCF online recommendation algorithm.
The techniques in ICF are based on probabilistic matrix factorization \cite{mnih2007probabilistic}. 
ICF utilizes the MCMC-Gibbs alternating optimization method to optimize the distributions of both users and items.
During online procedure, we can derive the posterior distributions of the users after the $(t-1)$-th interaction, denoting as 
$\mathcal{N}\left(\boldsymbol{e}_u; \boldsymbol{\mu}_{u, t}, \eqbf{\Sigma}_{u, t}\right)$, 
where the mean and variance terms can be obtained by following:
$$
\eqbf{\mu_{u,t}} = \left(\sum_{s=1}^{t-1} \eqbf{e}_{i_s} \eqbf{e}^\top_{i_s} + \lambda \eqbf{I} \right)^{-1} \left(\sum_{s=1}^{t-1} \eqbf{e}_{i_s} r_{u,i_{s}}\right),
\quad \eqbf{\Sigma}_{u, t}  = \left(\sum_{s=1}^{t-1} \eqbf{e}_{i_s} \eqbf{e}^\top_{i_s} + \lambda \eqbf{I} \right)^{-1} \sigma_{noise}^2,
$$
where $\eqbf{e}_{i_s}$ can be obtained by sampling from the item distribution or by using the maximizes the posterior probability (MAP) estimation of the item, and $\lambda$ is a regularization parameter.
Then, it selects the item for the $t$-th recommendation with the aim of maximizing the cumulative reward. Specifically, there are mainly two strategies have been explored to select the items in interactive collaborative filtering, i.e., upper confidence bound based method 
and Thompson sampling based method.  
Here we only introduce the upper confidence bound based method, 
since our method is built on it.  
It based on the principle of optimism in the face of uncertainty, which is to choose the item plausibly liked by users,
$$
i_t = \underset{i \in \mathcal{I}_{u,t}}{\arg \max } \ \left(\boldsymbol{\mu}_{u, t}^{\top} \boldsymbol{e}_{i} + c \sqrt{\log t}\left\|\boldsymbol{e}_{i}\right\|_{\eqbf{\Sigma}_{u, t}}\right),
$$
where $c$ is a constant to be tuned.

\subsection{Online Aggregation}
Before introducing the specifics of our method, let's first discuss the unique challenges associated with the arrival of new data in graph models. 
The main challenge in online updating of graph neural networks is that the new data changes the graph adjacency matrix, which in turn affects the global parameter.
Even with dynamic updating of the adjacency matrix, the complexity increases significantly with the depth of the graph neural network. 
Fortunately, in the online setting, the interaction data of a single user is relatively small compared to the existing interaction data (typically in the millions). 
It has minimal impact on the parameters of other users and items. 
Here, we follow the same approach as previous work \cite{zhao2013interactive} by only updating the parameters of the current user and not updating other users and items.

Importantly, due to the fixed distributions of other users and items, and under the premise of optimizing only the current user's distribution, we can use a single user distribution to replace the final distribution obtained by neighbor aggregation.
Without loss of generality, suppose $u$ is the current user.  
By the graph model, the feature vector of user $u$  can be 
derived as: 
$$
\eqbf{\Bar{e}}_{u} = \eqbf{E} \eqbf{g}_{u} = g_{u,u}\eqbf{e}_{u} + \sum_{u' \in \mathcal{U}\backslash{\{u\}}} g_{u,u'} \eqbf{e}_{u'} + \sum_{i \in \mathcal{I}} g_{u,i} \eqbf{e}_i.
$$
In our settings, the distribution of all embedding vectors 
remain fixed except $ \boldsymbol{e}_{u} $. 
The objective is to optimize $ \boldsymbol{e}_{u} $ so that $ \boldsymbol{\Bar{e}}_{u} \sim \mathcal{N}(\eqbf{\mu}_t, \eqbf{\Sigma}_t) $, where $ \mathcal{N}(\mu_t, \Sigma_t) $ denotes the posterior distribution for user $ u$ after observing data from $t-1$ interaction rounds, 
which will be computed explicitly in the following subsection.
It's worth noticing that even though the coefficient $ \eqbf{g}_{u} $ will change as more interaction data is observed, 
the desired posterior distribution we seek does not depend on $\eqbf{g}_{u} $. 
Instead, by adjusting $\eqbf{e}_{u'}$, we can establish an equilibrium that ensures that the distribution of $ \eqbf{\Bar{e}}_{u} $ matches our desired posterior distribution.
This implies that we can bypass the complicated process of neighbor aggregation. 
Instead, we can use a single user's parameter distribution $ \eqbf{e}_u $, to replace the final user distribution $ \Bar{\eqbf{e}}_u $ which is derived by neighbor aggregation. It is possible since the variational distributions of different users and items are independent, as represented by the equation \eqref{eq:vi:dis}. 
In the following discussion, we will focus our attention on $\eqbf{e}_u$ instead of $\eqbf{\Bar{e}}_u$, and we will use $\eqbf{e}_u$ as a replacement for $\eqbf{\Bar{e}}_u$.

\subsection{Meta Distribution and Posterior Update}

The pretraining phase of iGCF 
generates a large number of posterior  distributions of user feature vectors. 
Using existed user distributions, we can perform fast initialization for a newly arrived user, ensuring that the new user can have a good recommendation experience in the early stage. Concretely, 
$$
\eqbf{\mu}_\text{meta} 
= 
\frac{1}{ |\mathcal{U}| } 
\sum_{u \in \mathcal{U}} 
\eqbf{\Phi}^\ast \eqbf{g}_u, \quad
\eqbf{\Sigma}_\text{{meta}} 
=  
\frac{1}{|\mathcal{U}|-1} 
\sum_{u \in \mathcal{U}} 
(\eqbf{\Phi}^\ast \eqbf{g}_u - \eqbf{{\mu}_\text{meta}}) 
( \eqbf{\Phi}^\ast \eqbf{g}_u - \eqbf{{\mu}_\text{meta}})^{\top}.  
$$

We have $P(u_{new}) \sim \mathcal{N}(\eqbf{\mu}_\text{meta}, \eqbf{\Sigma}_\text{{meta}})$ as new user initial distribution.
At round $t$, the system recommends item to the user based on the user's posterior distribution, represent $\eqbf{{e}}^*_{i_t} \in \mathbb{R}^{d}$.
Then, the systems get user feedback ${r_t}$, 
which is a sample of $r_{u, i_t}$.   
Based on the new feedback, 
we update the posterior distribution of the user 
feature vector to be $\mathcal{N}(\eqbf{\mu}_{t}, \eqbf{\Sigma}_{t})$ with  
\begin{equation}
\begin{gathered} 
\eqbf{\Sigma}_{{t}}^{-1} = \eqbf{\Sigma}_{t-1}^{-1} + \frac{1}{\sigma_{noise}^2} \eqbf{e}^\ast_{i_t} \eqbf{e}^{\ast \top}_{i_t}, \\
\eqbf{\mu}_{t} = \eqbf{\Sigma}_{t}\left[  \eqbf{\Sigma}_{t-1}^{-1} \mu_{t-1} + \frac{1}{\sigma_{noise}^2} {r_t} \eqbf{e}^\ast_{i_t} \right],
\end{gathered}
\label{eq:update_dis}
\end{equation}
where $\eqbf{\Sigma}_{0} = \eqbf{\Sigma}_\text{{meta}} + \gamma \cdot \eqbf{I}$,  $\eqbf{\mu}_{0} = \eqbf{\mu}_\text{meta} $, and $\gamma$ is a hyperparameter to be tuned.
To make recommendations for users who appeared during the pretrain phase, we make only the following adjustments to the initial distribution.
\begin{equation}
    \label{eq:init:pre}
\begin{gathered}
\eqbf{\Sigma}_{{0}}^{-1} = \left(\eqbf{\Sigma}_\text{{meta}} + \gamma \cdot \eqbf{I}\right)^{-1} + \frac{1}{\sigma_{noise}^2} \eqbf{X}^{\top}_{0} \eqbf{X}_0, \\
\eqbf{\mu}_{0}=\eqbf{\Sigma}_{0}\left[ \left(\eqbf{\Sigma}_\text{{meta}} + \gamma \cdot \eqbf{I}\right)^{-1}  \eqbf{\mu}_\text{meta} + 
\frac{1}{\sigma_{noise}^2} \eqbf{X}_0^{\top} \eqbf{y}_0\right],
\end{gathered}
\end{equation}
where $\eqbf{X}_0 \in \mathbb{R}^{n_0 \times d}$ and $\eqbf{y}_0 \in \mathbb{R}^{n_0}$ represent the user's existing interaction records.

The motivation behind this method is that the preferences of new users tend to be similar to those of the general public. 
The naive method is to recommend items to users based on the number of positive reviews or popularity of items. While this method may lack personalization, it tends to provide satisfactory results in the early stages of recommendation.
The use of a meta-distribution for user initialization adopts this recommendation strategy. After pretraining, items with a high number of positive reviews will have higher dot product scores with ${u_\text{new}}$, making these well-reviewed items more likely to be recommended. Then, we use the meta-distribution as a prior distribution and adjust it based on the observed online feedback data from users, thus realizing personalized recommendations.

\subsection{Recommend Strategy}

We describe how the system makes a recommendation to the user in each round based on the posterior distribution.    
In the early stages of recommendation, the model's predictions are not sufficiently accurate due to a lack of user interaction data. 
It is necessary to incorporate additional exploration mechanisms that take into account the inherent uncertainty of the model's predictions. 
In this context, we apply the principle of optimism, embodied by the well-known Upper Confidence Bound (UCB), which guides recommendations based on the upper bound of the confidence interval of the predicted score.

At round $t$, the system is given that 
the feature vector of user $u$ 
follows the posterior distribution 
$\mathcal{N}(\eqbf{\mu}_t, \eqbf{\Sigma}_t)$. 
The distribution $\mathcal{N}(\eqbf{\mu}_t, \eqbf{\Sigma}_t)$ quantifies the 
uncertainty in the feature vector of 
user $u$. 
As a consequence, it leads to uncertainty in 
predicting the score of user $u$ toward 
item $i$, which can be quantified as: 
\[
\hat{r}^{(t)}_{u,i} 
= 
\eqbf{e}^{\top}_{u,t}
\eqbf{e}_{i}^{\ast },
\]
where $\hat{r}^{(t)}_{u,i}$ denotes the predicted 
score and 
$\eqbf{e}_{u,t}$ denotes a random 
feature vector with distribution 
$\mathcal{N}(\eqbf{\mu}_t, \eqbf{\Sigma}_t)$.   
We employ information-theoretic to derive confidence bounds on $\hat{r}^{(t)}_{u,i}$. 
The analysis technology for the following result is from \cite{lu2019information}.
\begin{thm}
    \label{thm:ucb}
    At round t, $\forall$ item $i$, with a probability of at least $ 1 - \delta $, the following inequality holds:
    \begin{equation}
        \label{eq:ucb}
        \left|\hat{r}^{(t)}_{u,i} - \mathbb{E} \left[\hat{r}^{(t)}_{u,i}\right]\right| \leq \frac{\Gamma_t}{2} \sqrt{I(u;i, r_{u,i})},
    \end{equation}
    where 
$$
\Gamma_{t}=4 \sqrt{\frac{\lambda_t}{\log \left(1+\frac{\lambda_t}{\sigma_{noise}^2}\right)} \log \frac{2 |\mathcal{A}_t|}{\delta}}, \quad \lambda_t =\max _{i \in \mathcal{A}_t} \eqbf{e}_i^{\ast \top}
\eqbf{\Sigma}_t \eqbf{e}^{\ast}_i,
$$
and $I(u;i, r_{u,i})$ is the ﬁltered mutual information between $u$ and the item-reward pair during the $t$-th round.
Moreover, 
$$
I(u;i, r_{u,i}) = \frac{1}{2} \log\left(1 + \frac{ 
\eqbf{e}_i^{\ast \top}
\eqbf{\Sigma}_t \eqbf{e}^{\ast}_i
}{\sigma_{noise}^2} \right).
$$
\end{thm}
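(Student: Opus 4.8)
The plan is to use the fact that, conditioned on the interaction history up to round $t$, the predicted score $\hat{r}^{(t)}_{u,i}=\eqbf{e}_{u,t}^{\top}\eqbf{e}_i^{\ast}$ is a one-dimensional Gaussian: $\eqbf{e}_{u,t}\sim\mathcal{N}(\eqbf{\mu}_t,\eqbf{\Sigma}_t)$ and $\eqbf{e}_i^{\ast}$ is a fixed (history-measurable) vector, so $\hat{r}^{(t)}_{u,i}$ has mean $\mathbb{E}[\hat{r}^{(t)}_{u,i}]=\eqbf{\mu}_t^{\top}\eqbf{e}_i^{\ast}$ and variance $\sigma_i^2:=\eqbf{e}_i^{\ast\top}\eqbf{\Sigma}_t\eqbf{e}_i^{\ast}$. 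First I would record the elementary Gaussian tail bound $\mathbb{P}(|\hat{r}^{(t)}_{u,i}-\mathbb{E}[\hat{r}^{(t)}_{u,i}]|>z)\le 2\exp(-z^2/(2\sigma_i^2))$, then apply a union bound over the candidate set $\mathcal{A}_t$ with $z=\sigma_i\sqrt{2\log(2|\mathcal{A}_t|/\delta)}$, so that on an event of probability at least $1-\delta$ we have $|\hat{r}^{(t)}_{u,i}-\mathbb{E}[\hat{r}^{(t)}_{u,i}]|\le \sigma_i\sqrt{2\log(2|\mathcal{A}_t|/\delta)}$ simultaneously for every $i\in\mathcal{A}_t$.

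Next I would identify the mutual-information term. Since $r_{u,i}=\eqbf{e}_u^{\top}\eqbf{e}_i^{\ast}+\xi$ with $\xi\sim\mathcal{N}(0,\sigma_{noise}^2)$ independent of everything else, and $r_{u,i}$ depends on $\eqbf{e}_u$ only through the scalar $\eqbf{e}_u^{\top}\eqbf{e}_i^{\ast}$, the data-processing identity for a deterministic function gives $I(u;i,r_{u,i})=I(\eqbf{e}_u^{\top}\eqbf{e}_i^{\ast};r_{u,i}\mid\mathcal{F}_{t-1})$. Conditioned on $\mathcal{F}_{t-1}$ the signal $\eqbf{e}_u^{\top}\eqbf{e}_i^{\ast}$ is $\mathcal{N}(\eqbf{\mu}_t^{\top}\eqbf{e}_i^{\ast},\sigma_i^2)$, so this is exactly a scalar Gaussian channel with input variance $\sigma_i^2$ and noise variance $\sigma_{noise}^2$, whose mutual information is the classical $\tfrac12\log(1+\sigma_i^2/\sigma_{noise}^2)$; this is the claimed closed form. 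With it, the target bound can be rewritten as
\[
\frac{\Gamma_t}{2}\sqrt{I(u;i,r_{u,i})}=\sqrt{2\log\!\tfrac{2|\mathcal{A}_t|}{\delta}}\;\sqrt{\frac{\lambda_t\,\log(1+\sigma_i^2/\sigma_{noise}^2)}{\log(1+\lambda_t/\sigma_{noise}^2)}},\qquad \lambda_t=\max_{i\in\mathcal{A}_t}\sigma_i^2 .
\]

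The remaining step is the algebraic comparison with the tail-bound quantity $\sigma_i\sqrt{2\log(2|\mathcal{A}_t|/\delta)}$, i.e. showing $\sigma_i^2\le \lambda_t\,\log(1+\sigma_i^2/\sigma_{noise}^2)/\log(1+\lambda_t/\sigma_{noise}^2)$. Equivalently $\log(1+\lambda_t/\sigma_{noise}^2)/\lambda_t\le \log(1+\sigma_i^2/\sigma_{noise}^2)/\sigma_i^2$, which holds because $x\mapsto \log(1+x/\sigma_{noise}^2)/x$ is nonincreasing on $(0,\infty)$ and $\sigma_i^2\le\lambda_t$ (the case $\sigma_i^2=0$ is trivial). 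Combining, $\sigma_i\sqrt{2\log(2|\mathcal{A}_t|/\delta)}\le \tfrac{\Gamma_t}{2}\sqrt{I(u;i,r_{u,i})}$ on the high-probability event, which is the claim. I expect the main subtlety to be the information-theoretic bookkeeping — defining the filtered mutual information precisely, checking that $i$ and hence $\eqbf{e}_i^{\ast}$ is $\mathcal{F}_{t-1}$-measurable so that the conditional law of $\eqbf{e}_u^{\top}\eqbf{e}_i^{\ast}$ is the stated Gaussian, and invoking the appropriate Gaussian-channel and data-processing statements from \cite{lu2019information}; the tail bound, union bound, and final monotonicity argument are routine.
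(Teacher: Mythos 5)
Your proposal is correct and follows essentially the same route as the paper: a Gaussian/Chernoff tail bound on $\hat{r}^{(t)}_{u,i}$, a union bound over $\mathcal{A}_t$, and the monotonicity of $x \mapsto x/\log(1+x)$ together with $\eqbf{e}_i^{\ast\top}\eqbf{\Sigma}_t\eqbf{e}_i^{\ast} \le \lambda_t$ to absorb the variance into $\Gamma_t\sqrt{I(u;i,r_{u,i})}$. The only cosmetic difference is that the paper derives the mutual-information identity by computing differential entropies of the $d$-dimensional posterior and applying Sylvester's determinant theorem, whereas you reduce to a scalar Gaussian channel via data processing; both give the same closed form.
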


Based on Theorem \ref{thm:ucb}, 
one can derive the UCB of $\hat{r}_{u,i}$ as:
\begin{equation}
\operatorname{UCB}_\delta\left(\hat{r}^{(t)}_{u,i}\right) 
= 
\eqbf{\mu}^\top_t 
\eqbf{e}^\ast_i 
+ \frac{\Gamma_t}{2}\sqrt{ \frac{1}{2} 
\log\left(1 + \frac{\eqbf{e}_i^{\ast \top} \eqbf{\Sigma}_t \eqbf{e}^{\ast}_i}{\sigma_{noise}^2} \right)}.
\label{eq:UCB}  
\end{equation}
Equation \eqref{eq:UCB} can be reduced to the classical LinUCB \cite{li2010contextual}, by combining with the variance $\mathbb{V}\left(\hat{r}^{(t)}_{u,i}\right) = \eqbf{e}^{\ast \top}_i
\eqbf{\Sigma}_t \eqbf{e}^{\ast}_i $, formally 
\begin{equation}
\operatorname{UCB}_\delta\left(\hat{r}^{(t)}_{u,i}\right) \leq \mathbb{E}\left[ \hat{r}^{(t)}_{u,i} \right] + \nu_t \cdot \sqrt{\mathbb{V}\left(\hat{r}^{(t)}_{u,i}\right)},
\label{eq:UCB2}    
\end{equation}
where $\nu_t = \frac{\Gamma_t}{2} \sqrt{\frac{1}{2\noise^2}}$, and $\nu_t$ can be treated as a hyper-parameter in practice \cite{zhao2013interactive, zou2020neural}.

\begin{algorithm}[t]
	\KwIn{User $\eqbf{u}$, parameter $\gamma$,
 item embedding vector $\left\{\eqbf{{e}_{i}^*}:{i\in \mathcal{I}}\right\},$ user's interaction history $\{\eqbf{X}_0, \eqbf{y}_0\}$, $\delta, \sigma_{noise}, T$}
    $\eqbf{\mu_0}, \eqbf{\Sigma_0} \gets $ initialize the user's prior distribution by equation \eqref{eq:init:pre}\;
	\For{t $\gets 1$ \KwTo T}{
	    $\mathcal{A}_t \gets$ construct the set of candidate items to be recommended \; 
	    $\operatorname{UCB}_\delta\left(\hat{r}^{(t)}_{u,i}\right) \gets$ calculate the UCB of the predicted score on $\mathcal{A}_t$ by equation \eqref{eq:UCB} \;
     $i_t \in \operatorname{argmax}_{i \in \mathcal{A}_t} \operatorname{UCB}_\delta\left(\hat{r}^{(t)}_{u,i}\right) $
     \;
	    recommend item $i_t$ to user $u$ and observe user's feedback $r_t$\;
        
            $\eqbf{\mu}_t, \eqbf{\Sigma}_t \gets$ update posterior distribution by equation \eqref{eq:update_dis} \;
        }
	\caption{Online Process of \model }\label{alg:online}
\end{algorithm}

We summarize the process of online recommendation. 
As described in Algorithm \ref{alg:online}, when a user arrives, we initialize the prior distribution based on the meta distribution and the user's interaction history. 
Then, in each round, we maintain a candidate set of recommended items for the user, calculate the Upper Confidence Bound for each item in the set, and make recommendations to the user based on UCB. 
We then collect user feedback, update the user's posterior distribution, and proceed to the next round.
    
\subsection{Regret Analysis}
\label{sec:thm}

Here, we explore the theoretical guarantees of the performance of the proposed methods in online recommendation scenarios. 
Specifically, we focus on the frequently studied regret in the field of bandit algorithms.  
An abundance of literature has investigated the linear bandit and its variants under the Bayesian framework \cite{lu2019information, basu2021no}. 
However, the majority of these studies make their assumptions on the background of known correct prior distributions, an expectation that is unattainable in real-world situations, especially in recommendation scenarios. 
A recent study \cite{peleg2022metalearning} assimilated existing conclusions about Bayesian bandits by investigating the model's regret performance under erroneous or inaccurate priors. 
Our analysis builds on 
the results of \cite{bastani2022meta, peleg2022metalearning} for meta-learning linear bandits. 
  
\noindent
{\bf Setting of regret analysis.}  
This section we make the following assumption about reward generation:
for user $u$, $\forall t \in[T]$, the system recommends 
item $i_t \in \mathcal{I}_{u,t} \subseteq \mathcal{I}$ to user $u$ 
and receives a reward
$$
r_{u,i_t} = \eqbf{e}^\top_u  
\eqbf{e}^\ast_{i_t} + \xi,
$$
where $\xi \sim \mathcal{N}\left(0, \sigma_{noise}^2\right)$.  We consider the general case that 
$\eqbf{e}_u$ follows the Gaussian distribution 
$ \mathcal{N}\left(\eqbf{\mu}_*, \boldsymbol{\Sigma}_*\right)$ with unknown $\eqbf{\mu}_*, \boldsymbol{\Sigma}_*$.  
$\mathcal{I}_{u,t}$ represents the set of item candidates for user $u$ at time $t$, $|\mathcal{I}_{u,t}| \leq N$. 
Typically, it excludes items that have been previously recommended to the user. 
 
For each fixed feature vector of user $u$, i.e., $\eqbf{e}_u$, 
the regret of making $T$ recommendations is defined as: 
$$ 
\text{Reg} 
\left(\eqbf{\mu}_0, \eqbf{\Sigma}_0, T; \eqbf{e}_u \right) 
\triangleq \sum_{t=1}^T \mathbb{E}\left[r_{u, i_{t}^*} - r_{u, i_{t}} \right],
$$
where $i_{ t}^\ast 
=
\operatorname{argmax}_{i \in \mathcal{I}_{u,t}} 
\eqbf{e}^\top_u \eqbf{e}_i. $
Then, the Bayesian regret is defined as 
\[
\text{Reg}_{\text{Bay}} 
\left(\eqbf{\mu}_0, \eqbf{\Sigma}_0, T \right)
\triangleq 
{\mathbb{E}_{\eqbf{e}_u \sim 
\mathcal{N} (\eqbf{\mu}_*, \eqbf{\Sigma}_*) } }
[
\text{Reg}(\eqbf{\mu}_0, \eqbf{\Sigma}_0, T; \eqbf{e}_u)
].
\]
 
\begin{assumption}
\label{ass:bound}
We make the following common boundedness assumption for both the item feature vectors and the user's prior distribution:
\begin{enumerate}[label=(\alph*)]
  \item 
  For any item $i, \|\eqbf{{e}_i}^*\|_2 \leq a$.  
  The embedding vectors 
  $\eqbf{{e}_i}^\ast, \forall i$, are i.i.d. samples 
  from a truncated zero mean Gaussian distribution with 
  covariance matrix $\boldsymbol{\Sigma}_{{A}}$ and 
  support $\{\eqbf{x} : \eqbf{x} \in \mathbb{R}^d, 
  \|\eqbf{x} \|_2 \leq a 
  \}$.  
  Furthermore, the covariance matrix $\boldsymbol{\Sigma}_{{A}}$ satisfies 
  $\lambda_{\min }\left(\boldsymbol{\Sigma}_{{A}}\right) \geq \lambda_{\boldsymbol{\Sigma}_{{A}}}>0$. 
  \item For user's prior mean $\|\eqbf{\mu}_*\|_2 \leq m$, 
and the minimal and maximal eigenvalues of the prior covariance matrix are lower and upper bounded by known constants
\[
0 < \underline{\lambda}  \leq  \lambda_{\min }\left(\boldsymbol{\Sigma}_*\right) 
\leq 
\lambda_{\max }\left(\boldsymbol{\Sigma}_* \right) \leq \bar{\lambda}.
\]
\end{enumerate}
\end{assumption}

\begin{definition}
\label{def:tau}
    We introduce sufficient rounds $\tau$ defined as,
    $$
    \tau = \min\left\{t:\lambda_{min}(\eqbf{V_t}) \geq \frac{\lambda_{\eqbf{\Sigma_A}}d}{2}\right\},
    $$
where $\eqbf{V_t} = \sum_{s=1}^t {\eqbf{e}^*_{i_s}} {\eqbf{e}^{*\top}_{i_s}}$.
\end{definition}

\begin{remark}
    $\tau$ is the number of rounds required for the algorithm to sufficiently explore in all directions. 
    The concept of $\tau$ has appeared in previous work \cite{peleg2022metalearning}. 
    However, to ensure sufficient exploration in all directions, the previous approach involved multiple rounds of completely random exploration in the early stages. 
    This is not feasible in real-world recommendation scenarios, as it is tantamount to sacrificing the user's initial satisfaction. 
    Therefore, we introduce a new concept for sufficient rounds in this context.
\end{remark}

\noindent
{\bf Regret upper bound.} 
We follow common practice in the bandit literature of dividing random events into the set of ``good events'' and their complement \cite{lattimore2020bandit}. 
Similar with previous work \cite{peleg2022metalearning}, 
for any $\delta > 0$, we defined the good event $\mathcal{E}$ as 
$\mathcal{E} 
\triangleq 
\left\{\mathcal{E}_1 \cap \mathcal{E}_2 \cap \mathcal{E}_3
\right\}$, 
where events $\mathcal{E}_1, \mathcal{E}_2$ 
and $\mathcal{E}_3$ are defined as: 
\[
\begin{aligned}
& 
\mathcal{E}_1 
\triangleq 
\left\{
\left\|
\eqbf{\mu}_0 -\eqbf{\mu}_{*} \right\| 
\leq 
\sqrt{c_1 \delta}\right\}, \\
& 
\mathcal{E}_2 
\triangleq 
\left\{
\left\|
\boldsymbol{\Sigma}_0
-\boldsymbol{\Sigma}_* 
\right\|_{\mathrm{op}} \leq \sqrt{c_2 \delta}, \quad \boldsymbol{\Sigma}_0 \succeq \boldsymbol{\Sigma}_* \right\}, 
\\
& 
\mathcal{E}_3 
\triangleq
\left\{ 
\left\| 
\boldsymbol{\Sigma}_u^{-1 / 2} 
\left(\eqbf{e}_u -\eqbf{\mu}_*\right) 
\right\|_{\infty}^2 
\leq 
2 \ln \left(\frac{d^2 T}{\delta}\right)\right\}, 
\end{aligned}
\]
where $c_1>0$ and $c_2 >0$ are 
hyper-parameters to be selected later. 
The events $\mathcal{E}_1, \mathcal{E}_2$ represent the distance between the prior of algorithm 
$\mathcal{N}(\eqbf{\mu}_0, \boldsymbol{\Sigma}_0)$ and the true unknown prior $\mathcal{N}({\eqbf{\mu}_*}, {\boldsymbol{\Sigma}_*}).$  
The event $\mathcal{E}_3$ is an instance-based event, 
unrelated to the performed algorithm.  
It represents the event at the realization of $\eqbf{e}_u$ is not too far from its mean. 

To facilitate the regret analysis, 
one needs to first derive sufficient conditions 
under which the good event $\mathcal{E}$ 
holds with high probability.  
This involves not only appropriate selections of 
the hyper-parameters $c_1$ and $c_2$, 
but also the selection of the hyper-parameter $\gamma$ 
which controls the width of $\eqbf{\Sigma}_0$, 
i.e., the covariance of prior distribution of $\eqbf{e}_u$.  
Some arguments in \cite{bastani2022meta} assist 
the selection of $\gamma$.  
Selecting $c_1$ and $c_2$ is technically non-trivial, and we leave the details of the selection process in 
the Appendix.  
We summarize appropriate selections of them in 
the following lemma.  
\begin{lemma}
\label{lemma:event}
The event $\mathcal{E}$ holds with probability larger than $1-\frac{9 \delta}{d T}$, for $M \geq 5d + 2 \ln \left(\frac{dMT}{3}\right)$, by setting $\delta = 1/M$, 
\begin{equation}
\label{eq:choose_f}
\begin{aligned}
& 
\gamma= 
32 \Bar{\lambda} \cdot \sqrt{\frac{5d + 2 \ln \left(\frac{dMT}{3}\right)}{M}}, 
\\
& c_1 =  {{\Bar{\lambda}}\left(2d + 3\ln{{\left(d M T\right)}} \right)}, \\
& c_2 =  (64 \Bar{\lambda})^2 \left({{5d + 2 \ln \left(\frac{dMT}{3}\right)}}\right). \\
\end{aligned}
\end{equation} 
\end{lemma}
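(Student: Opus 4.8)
The plan is to bound each of the three bad events $\mathcal{E}_1^c, \mathcal{E}_2^c, \mathcal{E}_3^c$ separately by $3\delta/(dT)$ and then apply a union bound. The key structural fact, imported from \cite{bastani2022meta}, is that $\eqbf{\mu}_0 = \eqbf{\mu}_\text{meta}$ and $\eqbf{\Sigma}_0 = \eqbf{\Sigma}_\text{meta} + \gamma \eqbf{I}$ are (up to the $\gamma \eqbf{I}$ inflation) the empirical mean and empirical covariance of the $M$ i.i.d.\ user-embedding samples $\{\eqbf{\Phi}^\ast \eqbf{g}_u\}_{u \in \mathcal{U}}$ drawn from $\mathcal{N}(\eqbf{\mu}_*, \eqbf{\Sigma}_*)$. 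So $\mathcal{E}_1$ is a concentration statement for the empirical mean of $M$ Gaussians, and $\mathcal{E}_2$ is a concentration statement for an empirical covariance matrix in operator norm, with the extra deterministic shift $\gamma \eqbf{I}$ chosen precisely so that the one-sided domination $\eqbf{\Sigma}_0 \succeq \eqbf{\Sigma}_*$ is forced once the empirical covariance is within $\sqrt{c_2 \delta}$ of $\eqbf{\Sigma}_*$ in operator norm.

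First I would handle $\mathcal{E}_1$: conditionally on $\eqbf{\Sigma}_*$, $\eqbf{\mu}_0 - \eqbf{\mu}_*$ is a zero-mean Gaussian vector with covariance $\eqbf{\Sigma}_*/M$, so $\|\eqbf{\mu}_0 - \eqbf{\mu}_*\|^2$ is a sub-exponential quadratic form bounded in terms of $\lambda_{\max}(\eqbf{\Sigma}_*) \le \bar\lambda$; a standard Gaussian-norm tail bound (e.g.\ a Laurent--Massart-type inequality) gives $\|\eqbf{\mu}_0 - \eqbf{\mu}_*\|^2 \lesssim \frac{\bar\lambda}{M}(d + \ln(dT/\delta))$ with probability $1 - \delta/(dT)$, which is exactly the form of $c_1$ after substituting $\delta = 1/M$. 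Next, for $\mathcal{E}_2$, I would invoke a matrix concentration bound for the empirical covariance of bounded/sub-Gaussian vectors: $\|\widehat{\eqbf{\Sigma}} - \eqbf{\Sigma}_*\|_{\mathrm{op}} \lesssim \bar\lambda\sqrt{(d + \ln(1/\delta))/M}$ with high probability. Choosing $\gamma$ to be (a constant multiple of) this same quantity guarantees that on this event $\eqbf{\Sigma}_0 = \widehat{\eqbf{\Sigma}} + \gamma \eqbf{I} \succeq \eqbf{\Sigma}_* - \|\widehat{\eqbf{\Sigma}} - \eqbf{\Sigma}_*\|_{\mathrm{op}}\eqbf{I} + \gamma \eqbf{I} \succeq \eqbf{\Sigma}_*$, and simultaneously $\|\eqbf{\Sigma}_0 - \eqbf{\Sigma}_*\|_{\mathrm{op}} \le \|\widehat{\eqbf{\Sigma}} - \eqbf{\Sigma}_*\|_{\mathrm{op}} + \gamma \le \sqrt{c_2 \delta}$; tracking constants through this chain yields the stated $\gamma$ and $c_2$. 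Finally, $\mathcal{E}_3$ is purely a Gaussian maximal inequality: each coordinate of $\eqbf{\Sigma}_*^{-1/2}(\eqbf{e}_u - \eqbf{\mu}_*)$ is $\mathcal{N}(0,1)$, so the squared $\ell_\infty$ norm exceeds $2\ln(d^2 T/\delta)$ with probability at most $d \cdot \delta/(d^2 T) = \delta/(dT)$ by a union bound over the $d$ coordinates together with the standard Gaussian tail $\Pr(|Z| > s) \le e^{-s^2/2}$.

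The main obstacle I anticipate is the operator-norm covariance concentration in $\mathcal{E}_2$ with explicit, non-asymptotic constants: off-the-shelf bounds (Vershynin-style) typically carry unspecified absolute constants, whereas the lemma commits to the precise prefactor $64\bar\lambda$ inside $c_2$ and $32\bar\lambda$ in $\gamma$. Getting these requires either a careful $\varepsilon$-net argument over the sphere (discretizing, controlling $\sup_{\eqbf{v}} \eqbf{v}^\top(\widehat{\eqbf{\Sigma}} - \eqbf{\Sigma}_*)\eqbf{v}$ via a Bernstein bound for the sub-exponential variables $(\eqbf{v}^\top(\eqbf{\Phi}^\ast\eqbf{g}_u - \eqbf{\mu}_*))^2$, then paying the usual $1/(1-2\varepsilon)$ net factor), or a direct appeal to a version of the bound in \cite{bastani2022meta} with the constants already pinned down. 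The condition $M \ge 5d + 2\ln(dMT/3)$ is exactly what is needed for the net cardinality term $d\ln 9$ (or similar) plus the log term to be dominated, making the square-root bound meaningful (less than the eigenvalue gap), so I would derive that constraint as a byproduct of requiring the concentration radius to be controlled. The remaining steps — the three union bounds combining to $9\delta/(dT)$ and the substitution $\delta = 1/M$ — are routine bookkeeping.
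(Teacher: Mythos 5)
Your plan follows essentially the same route as the paper's proof: a union bound over the three events, with $\mathcal{E}_1$ handled by a Gaussian quadratic-form tail bound for the empirical mean (the paper uses the Hsu–Kakade–Zhang concentration lemma after an MGF computation showing $\eqbf{\mu}_\text{meta}-\eqbf{\mu}_*$ is $\sqrt{\bar\lambda/M}$-sub-Gaussian), $\mathcal{E}_2$ by an operator-norm empirical-covariance concentration bound with the explicit constants taken from the cited prior work plus the $\gamma\eqbf{I}$ inflation to force $\eqbf{\Sigma}_0\succeq\eqbf{\Sigma}_*$ (exactly your chain $\|\eqbf{\Sigma}_0-\eqbf{\Sigma}_*\|_{\mathrm{op}}\le\|\widehat{\eqbf{\Sigma}}-\eqbf{\Sigma}_*\|_{\mathrm{op}}+\gamma$, with the condition on $M$ ensuring the $\max\{\sqrt{x},x\}$ in that bound reduces to $\sqrt{x}$), and $\mathcal{E}_3$ by a coordinate-wise Gaussian tail plus union bound. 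The only deviations are minor bookkeeping (the paper's per-event failure probabilities are $2\delta/(dT)$, $\delta/(dT)$, and $6\delta/(dT)$ rather than an even $3\delta/(dT)$ split, owing to the two-sided Gaussian tail and the $\ln(2/\delta)$ form of the covariance bound), which does not affect the correctness of your approach.
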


Based on Lemma \ref{lemma:event}, we prove an upper bound 
on the Bayesian regret, and we reserve the proof to the 
Appendix of this paper.  
 
\begin{thm}
\label{thm:regret}
Suppose $\gamma, c_1$ and $c_2$ satisfy Equation 
(\ref{eq:choose_f}).  
The Bayesian regret of Algorithm \ref{alg:online} is bound by: 
$$
\begin{aligned}
\text{Reg}_{\text{Bay}} 
\left(\eqbf{\mu}_0, \eqbf{\Sigma}_0, T \right)
\leq (1+k_1)\left( \Gamma \sqrt{ \frac{1}{2}Td \log \left(1+\frac{\Bar{\lambda}T}{\sigma_{noise}^2}\right)} + B\right)  + \frac{c_{bad}\delta}{\sqrt{d}} +k_2 \tau,
\end{aligned}
$$
where $ \Gamma = 4 \sqrt{\frac{\Bar{\lambda}}{\log \left(1+\frac{\Bar{\lambda}}{\sigma_{noise}^2}\right)} \log (4 N T)}$, 
$c_{b a d} = 22 a\left(m+\sqrt{4 \bar{\lambda} \ln \left({d^2MT}\right)}\right)$,
$ B = a \left(m + \sqrt{\Bar{\lambda} d } \right) $, $ k_2 = 2B$, $\delta = 1 / M$, and $k_1 \in \tilde{\mathcal{O}}\left(\sqrt{c_1 \delta}+\tau \sqrt{c_2 \delta}\right)$ defined in lemma \ref{lemma:meta}.
\end{thm}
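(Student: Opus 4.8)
The plan is to follow the meta-learning regret template of \cite{peleg2022metalearning,bastani2022meta}: split the $T$ rounds at the sufficient-exploration round $\tau$ of Definition~\ref{def:tau}, bound the first $\tau$ rounds trivially, and for $t>\tau$ condition on the good event $\mathcal{E}=\mathcal{E}_1\cap\mathcal{E}_2\cap\mathcal{E}_3$ of Lemma~\ref{lemma:event}, absorbing its complement into a lower-order term. Throughout, the per-round gap conditioned on $\eqbf{e}_u$ equals $\eqbf{e}_u^\top(\eqbf{e}^\ast_{i_t^\ast}-\eqbf{e}^\ast_{i_t})\le 2a\|\eqbf{e}_u\|$ by Assumption~\ref{ass:bound}(a), which serves as the basic estimate for all the ``crude'' blocks.

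\noindent\textbf{The two easy blocks.} For $t\le\tau$, taking the Bayesian expectation of $2a\|\eqbf{e}_u\|$ and using $\mathbb{E}\|\eqbf{e}_u\|\le\|\eqbf{\mu}_*\|+\sqrt{\mathrm{tr}(\eqbf{\Sigma}_*)}\le m+\sqrt{\bar{\lambda}d}$ (Assumption~\ref{ass:bound}(b)) bounds this block by $2B\tau=k_2\tau$. For the bad event I would intersect $\mathcal{E}^c$ with the high-probability ball $\{\|\eqbf{e}_u\|\le m+\sqrt{4\bar{\lambda}\ln(d^2MT)}\}$ (the residual Gaussian tail outside it being negligible), and combine the per-round bound $2a\|\eqbf{e}_u\|$ over $T$ rounds with $\mathbb{P}(\mathcal{E}^c)\le 9\delta/(dT)$ from Lemma~\ref{lemma:event}; with $\delta=1/M$ this contributes $\tfrac{c_{bad}\delta}{\sqrt d}$.

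\noindent\textbf{The main block ($t>\tau$ on $\mathcal{E}$).} Here I run the Bayesian LinUCB analysis. By the arm-selection rule together with Theorem~\ref{thm:ucb} (optimism), each instantaneous regret is at most $\Gamma_t\sqrt{I(u;i_t,r_{u,i_t})}$ (twice the half-width of Theorem~\ref{thm:ucb}) \emph{plus} a bias term, which appears because $\mathbb{E}[\hat r^{(t)}_{u,i}]=\eqbf{\mu}_t^\top\eqbf{e}^\ast_i$ is the posterior mean under the \emph{algorithm's} prior $\mathcal{N}(\eqbf{\mu}_0,\eqbf{\Sigma}_0)$ rather than under the true prior. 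On $\mathcal{E}_1\cap\mathcal{E}_2$ (closeness of the two priors, with $\eqbf{\Sigma}_0\succeq\eqbf{\Sigma}_*$ so the algorithm's confidence ellipsoids are valid supersets of the correctly-specified ones), Lemma~\ref{lemma:meta} bounds the cumulative effect of this bias by the multiplicative factor $(1+k_1)$ with $k_1\in\tilde{\mathcal{O}}(\sqrt{c_1\delta}+\tau\sqrt{c_2\delta})$, plus the additive $B$. Next, bound $\Gamma_t\le\Gamma$ using $\lambda_t=\max_i\eqbf{e}_i^{\ast\top}\eqbf{\Sigma}_t\eqbf{e}^\ast_i\le\bar{\lambda}$ (via $\mathcal{E}_2$ and $\eqbf{\Sigma}_t\preceq\eqbf{\Sigma}_0$), $|\mathcal{A}_t|\le N$, and a union bound over the $T$ rounds, which together supply the $\log(4NT)$ inside $\Gamma$; apply Cauchy--Schwarz to get $\sum_{t>\tau}\Gamma_t\sqrt{I(u;i_t,r_{u,i_t})}\le\Gamma\sqrt{T\sum_t I(u;i_t,r_{u,i_t})}$; and telescope the identity $I(u;i_t,r_{u,i_t})=\tfrac12\log(1+\eqbf{e}_{i_t}^{\ast\top}\eqbf{\Sigma}_t\eqbf{e}^\ast_{i_t}/\sigma_{noise}^2)$ of Theorem~\ref{thm:ucb} by the matrix-determinant (elliptical potential) lemma, starting from $\eqbf{\Sigma}_0=\eqbf{\Sigma}_{\text{meta}}+\gamma\eqbf{I}$, to obtain $\sum_t I(u;i_t,r_{u,i_t})\le\tfrac d2\log(1+\tfrac{\bar{\lambda}T}{\sigma_{noise}^2})$. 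Assembling these estimates bounds this block by $(1+k_1)\bigl(\Gamma\sqrt{\tfrac12 Td\log(1+\tfrac{\bar{\lambda}T}{\sigma_{noise}^2})}+B\bigr)$.

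\noindent\textbf{Conclusion and main obstacle.} Adding the three blocks and substituting $\gamma,c_1,c_2$ and $\delta=1/M$ from Lemma~\ref{lemma:event} yields the claimed bound. The main obstacle is Lemma~\ref{lemma:meta}: tracking, along the $T$ posterior updates \eqref{eq:update_dis}, how the misspecified prior perturbs $\eqbf{\mu}_t$ and $\eqbf{\Sigma}_t$ relative to the correctly-specified posterior, and showing the perturbation compounds only into the $(1+k_1)$ inflation --- which relies on $\eqbf{\Sigma}_0\succeq\eqbf{\Sigma}_*$ from $\mathcal{E}_2$ and on $\tau$ being small enough (under $M\gtrsim d$, as forced in Lemma~\ref{lemma:event}) that $k_1=o(1)$. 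A secondary technical point is keeping the elliptical-potential telescoping clean with the $\gamma\eqbf{I}$-inflated initial covariance $\eqbf{\Sigma}_0$.
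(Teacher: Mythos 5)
Your proposal is correct and follows essentially the same route as the paper: decompose the regret at the sufficient-exploration round $\tau$, bound the first block by the worst-case estimate $2a\,\mathbb{E}\|\eqbf{e}_u\|\cdot\tau = k_2\tau$, and handle the remaining rounds via the prior-misspecification lemma (Lemma~\ref{lemma:meta}, giving the $(1+k_1)$ inflation and the $c_{bad}\delta/\sqrt d$ bad-event term) combined with the information-theoretic Bayesian UCB bound (Lemma~\ref{lemma:bayesian regret}). The only difference is that you sketch the internals of those two cited lemmas (the bad-event truncation and the elliptical-potential telescoping) where the paper invokes them as black boxes from \cite{peleg2022metalearning} and \cite{lu2019information}.
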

The regret bound in the above Theorem \ref{thm:regret} depends on the total number of rounds T, the sufficient round $\tau$, and some constants.
The order with respect to $T$ is $\tilde{\mathcal{O}}(\sqrt{T})$, which is a sublinear result, meaning that as the number of rounds increases, the average regret tends to zero. 
The factor that affects $\tau$ is the exploration strategy used. The principle of optimism in the face of uncertainty tends to choose directions that have not been sufficiently explored. Intuitively, using a UCB strategy can help us achieve the goal of sufficient exploration with fewer rounds.
Through a more detailed analysis of $\tau$, there may be further improvement for algorithm, which will be left for future work.

\section{Experiments}
\label{sec:exp}

In this section, we conduct extensive experiments on three datasets to evaluate the effectiveness of \model.
Particularly, our experiments aim to answer the following research questions:
\begin{itemize}[leftmargin=*]
    \item \textbf{Q1}: How can \model \ outperform existing interactive collaborative filtering algorithms for the cold-start users?
    \item \textbf{Q2}: Can the \model \ be applied to warm-start users with drifting taste, i.e., those whose interests change over time?
    \item \textbf{Q3}: Considering top-k recommendation over time, can the algorithms still be effective?
    \item \textbf{Q4}: What’s the influence of various components in \model?
    \item \textbf{Q5}: How do the key hyperparameter settings impact \model’s performance?
\end{itemize}
In the following subsections, we first present the experimental settings and then answer the above research questions in turn.

\subsection{Experimental Settings}

\noindent 
{\bf Datasets.}   
We evaluate the proposed method on three real-world datasets, namely  KuaiRec, Movielens(1M), and EachMovie. The statistical information of the datasets is summarized in Table \ref{tab:stat}. 
These datasets have also been widely used in related studies \cite{zou2020neural, zhao2013interactive, gao2022kuairec}.

MovieLens and EachMovie are movie rating datasets that are widely used for performance comparison of recommendation algorithms. 
The interaction records in these datasets consist of integer ratings ranging from 0 to 5. 
Following previous work \cite{zou2020neural, zhao2013interactive}, we consider ratings greater than or equal to 4 as satisfied interactions.
KuaiRec \cite{gao2022kuairec} is a short video dataset where the interaction records represent the duration of user video views. 
Following the suggestion of the authors \cite{gao2022kuairec}, we consider video views with a duration greater than twice the length of the video as satisfied interactions.

It is worth noting that KuaiRec is a fully-observed dataset, meaning that we have access to complete information about user-item interactions. 
This allows us to address the challenge of offline evaluation effectively. 
For any recommendations made by the model to users in this dataset, we can always query the actual feedback information. 
However, for the other two datasets (MovieLens and EachMovie), users are unlikely to interact with all items. 
Therefore, in offline scenarios, we cannot obtain real feedback information for the model's recommendations if there is no corresponding interaction record in the dataset. 
To handle this, we follow previous work \cite{zou2020neural, zhao2013interactive} and fill the missing interaction records with 0, indicating no interaction.

\begin{table}[ht]
\small
\centering
\setlength{\tabcolsep}{3pt}
\caption{The Statistics of Datasets.}\label{tab:stat}
\begin{tabular}{l|c|c|c}
\hline Dataset & KuaiRec & MovieLens (1M) & EachMovie \\
\hline
\hline
\# Users & $1,411$ & $6,040$ & $61,265$ \\
\# Items & $3,327$ & $3,706$ & $1,623$ \\
\# Interactions & $4,676,570$ & $1,000,209$ & $2,811,718$ \\
\# Interactions Per User & $3314.37$ & $165.60$ & $1732.42$ \\
\# Interactions Per Item & $1405.64$ & $269.89$ & $45.89$ \\
\hline
\end{tabular}
\end{table}

\noindent
{\bf Baselines.} 
In this part, we introduce the baseline methods for comparison. The compared methods are as follows.
\begin{itemize}
    \item \textbf{Random}: In each interaction, randomly chooses an item from the entire item set to recommend to the target user. It is a baseline used to estimate the worst performance that should be obtained
    \item \textbf{Pop}: The system picks the most popular items to recommend to the target user. This is a commonly employed basic baseline. Despite lacking personalization, it performs surprisingly well in evaluations, as users tend to consume popular items.
    \item \textbf{ICF\cite{zhao2013interactive}}: Interactive collaborative filtering combines probabilistic matrix factorization\cite{mnih2007probabilistic} with various exploration methods for recommender system, including LinUCB\cite{li2010contextual}, 
    and TS\cite{chapelle2011empirical}
    \item \textbf{MF\cite{koren2009matrix}}: We always greedy w.r.t. the estimated scores and update users’ latent factor after every interaction. It is regard as the myopic algorithm of ICF.
    \item \textbf{NICF\cite{zou2020neural}}: A deep reinforcement learning method used to address interactive collaborative filtering. We use the implementation provided by the authors\footnote{https://github.com/zoulixin93/NICF}.
    \item \textbf{\model}: Our proposed method.
\end{itemize}

\noindent 
{\bf Evaluation metrics.} 
Consistent with previous work \cite{zhao2013interactive}, three evaluation metrics are used:
\begin{itemize}
    \item \textbf{Cumulative Precision}@T. A straightforward measure is the number of positive interactions collected during the total $T$ interactions,
    $$
    \text { precision@T }=\frac{1}{\# \text {users}} \sum_{\text {users}} \sum_{t=1}^T \theta_t.
    $$
    For ratings datasets (MoveieLens, EachMovie), we define $\theta_t = 1$ if $r_{u, i_t}>=4$, and 0 otherwise.
    For video datasets (KuaiRec), we set $\theta_t = r_t$.
    
    \item \textbf{Cumulative Recall}@T. We can also check for the recall during $T$ timesteps of the interactions,
    $$
    \text {recall@T} = \frac{1}{\# \text {users}} \sum_{\text {users}} \sum_{t=1}^T \frac{\theta_t}{\text{ \# satisfied items }}.
    $$

    \item \textbf{Cumulative $\eqbf{nDCG_k}$}@T. For the case that multiple items are shown in one interaction, the ranking of the item listed is also important: it is more useful to have the highly relevant items appear earlier in the ranking list. We use the normalized discounted cumulative gain ($nDCG_k$) as the ranking measure
    $$
    nDCG_{k} = \frac{1}{\mathcal{Z}} \sum_{j = 1}^k \frac{2^{\theta_{t, j}} - 1}{\log_2(1 + j)},
    $$
    where $\theta_{t, j} $ is the real feedback $\theta_t$ of the item shown at ranking position $j$ in round $t$. $\mathcal{Z}$ is the normalization factor making the score of the optimal ranking to 1 such that $0 \leq nDCG_k \leq 1$. 
    Similar to the cumulative precision and recall, here the cumulative $nDCG_k$ should also take sum over $T$ and average on users,
    \[
    nDCG_{k}@T = \frac{1}{\# \text {users}} \sum_{\text {users}} \sum_{t=1}^T nDCG_{k}.
    \]
\end{itemize}

\noindent
{\bf Parameter setting.}
For all datasets, we use 50\% of the interaction data as the training set. Test users are selected outside of the training set, according to the different goals of each experiment.
For all methods except Random and Pop, a grid search is used to find the optimal configurations. 
For ICF, MF, and the proposed method, the latent dimensions \(d\) are chosen from the set \{32, 64, 128, 256\}. The maximum number of alternation optimization rounds for ICF is set to 20. 
In the proposed method, the depth \(K\) of the graph neural network is fixed at 3, \(\gamma\) is chosen from the set \{0.01, 0.1, 1 \}, and the learning rate is picked from the set \{0.01, 0.1, 0.5, 1, 5, 10\}.
We consider \(v_t\) in equation \eqref{eq:UCB2} as a tunable hyper-parameter, selecting its value from the set \{0.1, 0.5, 1, 5, 10\}.
As for the NICF, we use the same configuration as described in their paper \cite{zou2020neural}, where the initial dimension is set to 50, two attention blocks are used, and the optimal experimental results are reported from a selection of 3000 epochs.
We report the result of each method with its optimal hyper-parameter settings.

\subsection{Performance Comparison on cold-start cases (Q1)}
In this experiment, to evaluate the performance of the algorithm on cold start users, we selected 200 users with the highest number of interactions as test users. 
Their interaction data was excluded from the training data, ensuring that their previous interactions were not seen during the training process. 
We looked specifically at how well the different methods performed in recommending items to these users over a period of 120 interactions.

The experimental results are presented in Table \ref{tab:cold-s}. 
We ran our method and the comparison methods ten times and report the best results. The best-performing method is highlighted in bold and marked with an asterisk $*$ to indicate significant improvement over the best baseline, as determined by the Wilcoxon signed-rank test with the p-value less than $0.05$.
To summarize, our findings are as follows:
In the context of cold-start problems, the proposed method outperforms existing techniques in terms of recall and precision. In particular, on average, the relative improvement in cumulative $precision@120$ over the best baseline is 3.34\%, 3.84\%, and 10.66\%, respectively, for the three benchmark datasets.
Among the existing methods, ICF-TS does not perform as effectively as UCB methods in the early stages of cold start scenarios. However, its long-term performance is similar to that of UCB-type methods. NICF shows strong results initially, but its performance declines over a longer period of time.

\begin{table}[t]
\centering
\small
\caption{Cold-start recommendation performance of different models.}
\label{tab:cold-s}
\resizebox{\textwidth}{!}{
\begin{tabular}{c|cccc|cccc|cccc}
\hline 
Dataset & \multicolumn{4}{|c|}{ KuaiRec } & \multicolumn{4}{c|}{ MovieLens (1M) } & \multicolumn{4}{c}{ EachMovie } \\
\hline 
\hline Measure & \multicolumn{4}{|c|}{ Cumulative Precision } & \multicolumn{4}{c|}{ Cumulative Precision } & \multicolumn{4}{c}{ Cumulative Precision } \\
\hline 
T & 10 & 20 & 40 & 120 & 10 & 20 & 40 & 120  & 10 & 20 & 40 & 120  \\
\hline 
\hline 
Random & 0.435 & 0.925 & 2.055 & 5.870 & 0.900 & 1.830 & 3.535 & 10.98 & 1.730 & 3.350 & 6.750 & 20.53 \\
Pop & 0.935 & 1.780 & 3.400 & 8.870 & 7.200 & 13.34 & 25.78 & 64.16 & 6.245 & 12.33 & 23.55 & 58.58 \\
MF & 1.810 & 4.055 & 7.425 & 15.17 &3.730 &7.900 &17.68 &56.24 &3.125 &6.715 &16.74 &50.96\\
ICF-UCB & 6.240 & \underline{10.55} & \underline{14.88} & 25.99 & 7.410 & 14.48 & \underline{27.06} & \underline{66.04} & 6.945 & 12.72 & 24.81 & \underline{63.32}\\
ICF-TS & 4.085 & 7.885 & 13.98 & \underline{26.38} & 5.180 & 10.84 & 22.27 & 62.00 & 5.145 & 10.46 & 20.82 & 57.43 \\
NICF & \underline{6.365} & 10.49 & 14.69 & 22.68 & \underline{7.505} & \underline{15.01} & 26.18 & 57.63 & \underline{7.260} &\underline{13.62}	&\underline{25.10}	& 51.17 \\
\model & $\mathbf{6.405}^*$ & $\mathbf{10.61}^*$ & $\mathbf{15.18}^*$ & $\mathbf{27.26}^*$ & $\mathbf{7.535}^*$ & $\mathbf{15.36}^*$ & $\mathbf{27.31}^*$ & $\mathbf{68.58}^*$ & $\mathbf{7.550}^*$ & $\mathbf{14.34}^*$ & $\mathbf{27.27}^*$ & $\mathbf{70.07}^*$\\
\hline 
\hline 
Measure & \multicolumn{4}{|c|}{ Cumulative Recall } &  \multicolumn{4}{|c}{Cumulative Recall} &  \multicolumn{4}{|c}{Cumulative Recall} \\
\hline 
T & 10 & 20 & 40 & 120 & 10 & 20 & 40 & 120  & 10 & 20 & 40 & 120  \\
\hline 
\hline
Random & 0.0025 & 0.0053 & 0.0124 & 0.0342 & 0.0027 & 0.0053 & 0.0102 & 0.0311 & 0.0063 & 0.0120 & 0.0242 & 0.0741 \\
Pop & 0.0051 & 0.0102 & 0.0199 & 0.0497 & 0.0246 & 0.0446 & 0.0855 & 0.2044 & 0.0244 & 0.0481 & 0.0901 & 0.2230 \\
MF & 0.0141 & 0.0311 & 0.0526 & 0.0936 & 0.0117 & 0.0251 & 0.0553 & 0.1790 &0.0113 & 0.0257 & 0.0564 & 0.1730 \\
ICF-UCB & 0.1021 & \underline{0.1547} & \underline{0.1881} & 0.2551 & 0.0251 & 0.0482 & \underline{0.0909} & \underline{0.2133} & 0.0270 & 0.0492 & 0.0961 & \underline{0.2422} \\
ICF-TS & 0.0524 & 0.1016 & 0.1744 & \underline{0.2580} & 0.0165 & 0.0350 & 0.0729 & 0.1995 & 0.0202 & 0.0404 & 0.0794 & 0.2180 \\
NICF & \underline{0.1040} & 0.1540 & 0.1847 & 0.2320 & \underline{0.0254} & \underline{0.0489} & 0.0869 & 0.1831 & \underline{0.0284} &\underline{0.0530} &\underline{0.0967} &0.1947\\
\model & $\mathbf{0.1052}^*$ & $\mathbf{0.1554}^*$ & $\mathbf{0.1909}^*$ & $\mathbf{0.2652}^*$ & $\mathbf{0.0261}^*$ & $\mathbf{0.0496}^*$ & $\mathbf{0.0918}^*$ & $\mathbf{0.2204}^*$ & $\mathbf{0.0300}^*$ & $\mathbf{0.0571}^*$ & $\mathbf{0.1073}^*$ & $\mathbf{0.2716}^*$\\
\hline
\end{tabular}
}
\end{table}

\subsection{Performance Comparison on warm-start cases with taste drift (Q2)}
In this experiment, our goal is to investigate whether the algorithms can effectively adapt to warm-start users and track their changing interests over time. 
For each user, we divide their rating records into two equal-sized periods, referred to as set 1 and set 2. The interactions in set 1 occurred earlier in time compared to set 2.
Following previous work \cite{zou2020neural,zhao2013interactive, shi2012adaptive}, to capture the users' interest drift, we utilize the genre information of the items as an indication. 
Specifically, we calculate the cosine similarity between the genre/categories vectors of the two periods. 
Users with the smallest cosine similarity are considered to exhibit significant interest drift between the two time periods. 
The remaining users, along with their ratings, form the training set. 
We conduct experiments on the KuaiRec and MovieLens datasets, as the EachMovie dataset does not provide genre information for movies.
For each test user, in the ﬁrst period with 60 interactions, we use set 1 as the ground truth of the test users; and then, from the 61st interaction, the ground truth is changed from set 1 to set 2 to simulate the process of his/her taste drift.

The experimental results are presented in Table \ref{tab:warm-s}. 
As we focus on the performance when the user has changed the interest, only the results for $T \ge 60$ are shown. All other settings are the same as for cold start experiments.
To summarize, our findings are as follows:
Our proposed methods show superior performance relative to the baselines on both datasets. The improvement over the best performing baseline reaches as high as 4.31\% for the KuaiRec dataset and 5.79\% for the MovieLens (1M) dataset. This suggests that for warm-start users, our proposed methodology is able to monitor changes in user preferences and adjust its strategy to better meet user needs.

\begin{table}[t]
\centering
\small
\caption{Performance on Warm-start Users with Taste Drift on KuaiRec and MovieLens.}
\label{tab:warm-s}
{
\begin{tabular}{c|cccc|cccc|cccc}
\hline 
Dataset & \multicolumn{4}{|c|}{ KuaiRec } & \multicolumn{4}{c}{ MovieLens (1M) } \\
\hline 
\hline Measure & \multicolumn{4}{|c|}{ Cumulative Precision } & \multicolumn{4}{c}{ Cumulative Precision } \\
\hline 
T & 60 & 80 & 100 & 120 & 60 & 80 & 100 & 120 \\
\hline 
\hline 
Random & 2.02 & 2.71 & 3.27 & 3.90 & 1.39 & 1.92 & 2.40 & 2.77 \\
Pop & 5.62 & 5.87  & 5.88 & 5.95 & 15.43 & 17.89 & 20.24 & 22.53 \\
MF & 3.15 & 3.63 & 4.51 & 4.91 &2.93 &3.71 &5.67 &7.29\\
ICF-UCB & 13.84 & {14.59} & {15.15} & 15.71 & \underline{15.63} & \underline{18.05} & {20.44} & {23.18} \\
ICF-TS & \underline{13.87} & \underline{14.63} & \underline{15.18} & \underline{15.77} & 14.51 & 17.66 & \underline{20.46} & \underline{23.66} \\
NICF & {6.36} & 7.41 & 7.93 & 8.05 & {7.96} & {13.03} & 16.05 & 18.58 \\
\model & $\mathbf{14.87}^*$ & $\mathbf{15.56}^*$ & $\mathbf{15.83}^*$ & $\mathbf{16.45}^*$ & $\mathbf{16.55}^*$ & $\mathbf{19.26}^*$ & $\mathbf{22.25}^*$ & $\mathbf{25.03}^*$\\
\hline 
\hline 
Measure & \multicolumn{4}{|c|}{ Cumulative Recall } &  \multicolumn{4}{|c}{Cumulative Recall} \\
\hline 
T & 60 & 80 & 100 & 120 & 60 & 80 & 100 & 120 \\
\hline 
\hline
Random & 0.0083 & 0.0110 & 0.0135 & 0.0166 & 0.0078 & 0.0110 & 0.0136 & 0.0159 \\
Pop &  0.0257 & 0.0268 & 0.0269 & 0.0271  & 0.0994 & 0.1146 & 0.1302 & 0.1439 \\
MF & 0.0149 & 0.0179 & 0.0213 & 0.0243 &0.0151 &0.0164 &0.0283 &0.0533\\
ICF-UCB & 0.1570 & {0.1604} & {0.1629} & 0.1653 & \underline{0.1023} & \underline{0.1153} & {0.1310} & {0.1464} \\
ICF-TS & \underline{0.1573} & \underline{0.1616} & \underline{0.1638} & \underline{0.1662} & 0.0955 & 0.1140 & \underline{0.1315} & \underline{0.1470} \\
NICF & {0.0591} & 0.0632 & 0.0648 & 0.0651 & 0.0619 & 0.0811 & 0.1036 & 0.1190 \\
\model & $\mathbf{0.1642}^*$ & $\mathbf{0.1672}^*$ & $\mathbf{0.1683}^*$ & $\mathbf{0.1712}^*$ & $\mathbf{0.1048}^*$ & $\mathbf{0.1228}^*$ & $\mathbf{0.1421}^*$ & $\mathbf{0.1599}^*$\\
\hline
\end{tabular}
}
\end{table}

\subsection{Top-K Ranking Performance (Q3)}
In this part, we conduct experiments with multiple item slots at each interaction. 
The common ranking-aware measure nDCG is used to test the performance. 
The test users are the same as the ones in the cold-start setting. 
The only diﬀerence is that the number of interactions is reduced since the number of recommended items in each interaction increases. 
Since NICF is a reinforcement learning method specifically designed for single-item recommendations, we have not included it in our comparison here.

The experimental results are presented in Table \ref{tab:topk}. 
We adjusted the number of recommended items in each round of the cold start experiments to either $3$ or $5$, while keeping all other settings the same.
To summarize, our findings are as follows:
A similar trend is shown compared to the case of one item, 
the proposed method outperforms existing techniques in terms of nDCG,
the relative improvement in cumulative $nDCG_3@40$ over the best baseline is 2.67\%, 4.42\%, and 10.66\%, respectively, for the three benchmark datasets.

\begin{table}[ht]
    \centering
    \caption{Performance on Top-K Recommendations by Cumulative nDCG.}
    \resizebox{\textwidth}{!}{
    \begin{tabular}{c|cc|cc|cc|cc|cc|cc}
    \hline
    Dataset & \multicolumn{4}{|c|}{ KuaiRec } & \multicolumn{4}{c|}{ MovieLens (1M) } & \multicolumn{4}{c}{ EachMovie } \\
    \hline
    Measure & \multicolumn{2}{|c|}{ $nDCG_3@T$ } & \multicolumn{2}{c}{$nDCG_5@T$} & \multicolumn{2}{|c|}{ $nDCG_3@T$ } & \multicolumn{2}{c}{$nDCG_5@T$} & \multicolumn{2}{|c|}{ $nDCG_3@T$ } & \multicolumn{2}{c}{$nDCG_5@T$} \\    
    \hline 
    T & 20 & \multicolumn{1}{c|}{40} & 10 & 20 & 20 & 40 & 10 & 20 & 20 & 40 & 10 & 20 \\
    \hline
    \hline Random & 1.04 & 1.98 & 0.48 & 0.97 & 1.91 & 3.84 & 0.91 & 1.89 & 3.37 & 6.83 & 1.67 & 3.37\\
    Pop & 1.64 & 2.99 & 0.79 & 1.51 & 12.21 & 21.25 & \underline{6.25} & 10.96 & 11.30 & 19.59 & 5.88 & 10.28 \\
    MF & 4.29 & 6.80 & 2.01 & 3.68 & 8.55 & 16.52 & 4.12 & 8.47 & 9.57 & 18.58 & 4.70 & 9.50 \\
    ICF-UCB & \underline{6.14} & 8.57 & \underline{3.41} & {4.76} & \underline{12.30} & \underline{21.68} & \underline{6.25} & \underline{11.22} & \underline{11.93} & \underline{21.19} & \underline{6.08} & \underline{11.04} \\
    ICF-TS & 6.06 & \underline{8.98} & 3.14 & \underline{4.94} & 9.70 & 19.45 & 4.68 & 9.69 & 9.98 & 18.89 & 4.96 & 9.54 \\
    \model & $\mathbf{6.38}^*$ & $\mathbf{9.22}^*$ & $\mathbf{3.47}^*$ & $\mathbf{5.09}^*$ & $\mathbf{12.50}^*$ & $\mathbf{22.64}^*$ & $\mathbf{6.47}^*$ & $\mathbf{11.85}^*$ & $\mathbf{13.47}^*$ & $\mathbf{23.45}^*$ & $\mathbf{6.86}^*$ & $\mathbf{12.50}^*$\\
    \hline
    \end{tabular}
    }
    \label{tab:topk}
\end{table}

\subsection{Model Ablation and Hyperparameter Sensitivity Studies (Q4, Q5)}
In this subsection, we conduct experiments to investigate the role of different components within our proposed method, as well as the impact of key parameters on the model's performance. 
Compared to traditional techniques, our improvements are manifested in three aspects: 
1) in the pretraining phase, we use graph neural network aggregation operations to link items and users; 
2) in the online phase, we use a meta-distribution as initialization; 
3) we incorporate exploration techniques in the online phase.
The key hyperparameter in our experiments is the depth $K$ of the graph neural network.
The experimental results are presented in Table \ref{tab:ablation}. 

To investigate the improvements provided by the graph aggregation component, we ran experiments under the ``- Aggregation ($K=0$)'' setting, i.e., the depth of the graph neural network was set to 0. This reduces the pretraining method to the PMF \cite{mnih2007probabilistic} optimized by SGD. To investigate the impact of the meta-distribution, we conducted experiments under the ``- Meta" setting, where the initialization of the online user distribution was done with $\mathcal{N}(0, \sigma_0^2)$, as opposed to using a meta-distribution. To examine the contribution of the exploration technique, we ran experiments under the ``- Exploration" setting, where the degree of exploration during the online phase was set to zero. We compared these results with the default settings, which include the full model with all three components. The experimental results show that the performance of the model decreases regardless of which component is omitted.

To evaluate the combined effect of the meta-distribution and exploration techniques, we also ran the ``- Meta \& Exploration" experiment. The results indicate that the absence of additional techniques in the online phase can lead to a significant performance degradation.

To investigate the influence of the key hyper-parameter $K$, we ran ``$K=1$" and ``$K=2$" experiments. In the default method, the $K$ is set to $3$. The results suggest that the performance of the model improves with increasing depth. When the neighborhood information is not aggregated, i.e., when ``$K=0$" is set, the performance of the model deteriorates significantly.

\begin{table}[ht]
    \centering
    \caption{ Model ablation and effects of key hyperparameters on KuaiRec}
    \begin{tabular}{l|ccc}
    \hline Method & $Precsion@60$ & $Recall@60$ & $nDCG_3@20$ \\
    \hline \hline 
    Default & 18.94 & 0.2170 & 6.38 \\
     - Meta & 18.01 & 0.2098 & 6.28 \\
     - Exploration & $18.78$ & 0.2150 & 6.33 \\
     - Meta \& Exploration & $17.52$ & 0.2086 & 6.25 \\
     - Depth ($K$=2) & $18.85 $ & 0.2152 & 6.34 \\
     - Depth ($K$=1) & $18.80 $ & $0.2151 $ & $6.33 $ \\   
     - Aggregation ($K$=0)  & $17.74 $ & $0.2102$ & $ 6.27 $ \\       
    \hline
    \end{tabular}
    \label{tab:ablation}
\end{table}

\section{Related Work}
\label{sec:works}
In this section, we review the related work. 
Our work is mainly related to Collaborative Filtering, Interactive Recommender System and Bayesian Bandit. 

\subsection{ Collaborative Filtering}
In the field of modern recommender systems, collaborative filtering (CF) plays a prominent role \cite{covington2016deep, ying2018graph}. 
CF models such as Matrix Factorization (MF) \cite{koren2009matrix} were commonly employed, which used an embedding vector projection of a user or item ID to encapsulate users and items as embeddings and thus reconstructing historical interactions.
As the field evolved, the advent of neural network-based recommender models such as NCF \cite{he2017neural} and LRML \cite{tay2018latent} brought about a shift. Although these models retained the use of the embedding component, they greatly enhanced the interaction modeling mechanism by exploiting the ability of neural networks to model complex interactions.
More recently, inspired by the power of graph convolution, new methods such as LightGCN \cite{he2020lightgcn}, GC-MC \cite{berg2017graph}, PinSage \cite{ying2018graph}, SiGRec \cite{HUANG2023103403}, and XSimGCL \cite{10158930} have been developed that adapt GCN to the user-item interaction graph for recommendations. 
These graph neural network-based models capture CF signals from high-hop neighbors, illustrating a significant leap forward in the field of recommendation systems.
Our work effectively integrates the LightGCN \cite{he2020lightgcn} model widely used in the field.

\subsection{Interactive Recommender System}
There are two major method, contextual bandit and reinforcement learning, for Interactive recommender system:
The Contextual Bandit approach focuses primarily on the application of bandit technology to various scenarios and developing theoretical results. 
Numerous recommender systems based on the Contextual Bandit have been developed to address different recommendation tasks. 
These include news recommendation \cite{li2010contextual}, collaborative filtering \cite{zhao2013interactive}, and online advertising \cite{du2021exploration, wu2021adversarial, guo2020deep}.
On the other hand, Reinforcement Learning methods focus on developing efficient technologies to overcome the challenges inherent in direct RL applications, such as off-policy training \cite{zou2020pseudo}, off-policy evaluation \cite{gilotte2018offline}, and handling large action spaces \cite{dulac2015deep}. 
The focus of these topics is the optimization of metrics with delayed attributes \cite{zou2019reinforcement, zou2019reinforcement2}.
NICF \cite{zou2020neural}, as a RL method, ingeniously integrates modified self-attention blocks and Q-learning, successfully applying RL to the domain of Interactive Collaborative Filtering. 
Our proposed method belongs to the bandit method within the field of collaborative filtering.

\subsection{Bayesian Bandit}
We review theoretical work conducted under Bayesian settings in recent years.
\cite{kveton2021meta} focus on a fully Bayesian multi-armed bandits (MAB) setting, where tasks are drawn from a Gaussian prior. The prior is parameterized by a known scalar covariance and an unknown mean, that is itself drawn from a known hyper-prior. The authors derive a regret bound which depends on $\tilde{\mathcal{O}}(T^2)$.
\cite{basu2021no} assume a fully Bayesian framework where the covariance is known and the mean is sampled from a known Gaussian distribution.
They establish a prior-dependent regret bound whose worst-case dependence on $T$ is $\tilde{\mathcal{O}}(\sqrt{T})$.
\cite{simchowitz2021bayesian} bound the single instance misspeciﬁcation error for a wide class of priors and settings and achieve an upper-bound of $\tilde{\mathcal{O}}(\varepsilon T^2)$, where $\varepsilon$ is the initial total-variation prior  estimation error.
\cite{peleg2022metalearning} assume the expected rewards originate from a vector, sampled from a Gaussian distribution with unknown mean and covariance. They derive a regret bound that depends on $T$ as $\tilde{\mathcal{O}}(\sqrt{T})$, at the cost of sacrificing the first $\tau$ rounds for random exploration.
Our theoretical results are based on the results of \cite{peleg2022metalearning}, which most closely resemble real-world recommendation scenarios.

\section{Conclusion}
\label{sec:con}
In this paper, we propose a novel method iGCF that extends the ICF and addresses the shortcomings of existing bandit methods, the challenges posed by the cold-start problem and data sparsity. 
Our proposed method combines bandit techniques with state-of-the-art graph neural networks, which effectively enhance the collaborative filtering between users and items. This enhancement significantly improves the expressiveness and performance of the model. 
To overcome the computational hurdles posed by nonlinear models, we incorporate variational inference techniques into the method, ensuring analytical computation even in the complex context of probabilistic models.
In addition, we introduce a meta-learning method to address the cold-start problem, which can provide a positive initial interaction experience.
We use the Bayesian Linear UCB method to recommend items to users. 
Meanwhile, we provide a theoretical analysis of regret to guarantee its performance.
Finally, extensive experiments on three real-world datasets have demonstrated the remarkable results of our method, which consistently outperforms state-of-the-art baselines.

\section{ Acknowledgments}
The work was supported by grants from the National Key R\&D Program of China (No. 2021ZD0111801) and the National Natural Science Foundation of China (No. 62022077).

\bibliographystyle{elsarticle-num} 
\bibliography{ref}

\appendix
\section{Proof of Results}

\subsection{ Proof of theorem \ref{thm:ucb}}

\begin{lemma}
    \label{lemma:m_info}
    If $\eqbf{\theta} \in \mathbb{R}^d$ follows $\mathcal{N}\left(\eqbf{\mu}, \eqbf{\Sigma}\right)$, and $r = \eqbf{e^{\top}} \eqbf{\theta} + \xi$ where $\eqbf{e} \in \mathbb{R}^d$ is fixed and $\xi \sim \mathcal{N}\left(0, \noise^2\right)$, then
$$
I(\eqbf{\theta};\eqbf{e}, r) = \frac{1}{2} \log \left(1+\frac{\eqbf{e}^{\top} \eqbf{\Sigma} \eqbf{e}}{\noise^2}\right) .
$$
\begin{proof}
We use $h$ to denote the differential entropy of a continuous random variable
$$
\begin{aligned}
I(\eqbf{\theta};\eqbf{e}, r) & =h(\eqbf{\theta})-h(\eqbf{\theta} \mid \eqbf{e},r) \\
& =\frac{1}{2} \log \operatorname{det}(2 \pi \eqbf{e} \eqbf{\Sigma})-\frac{1}{2} \log \operatorname{det}\left(2 \pi \eqbf{e} \left( \eqbf{\Sigma^{-1}} + \frac{\eqbf{e} \eqbf{e^{\top}}}{\noise^2}\right)^{-1}\right) \\
& =\frac{1}{2} \log \operatorname{det}\left(\eqbf{I_d} + \frac{\eqbf{\Sigma} \eqbf{e} \eqbf{e^{\top}}}{\noise^2}\right) \\
& =\frac{1}{2} \log \left(1+\frac{\eqbf{e}^{\top} \eqbf{\Sigma} \eqbf{e}}{\noise^2}\right),
\end{aligned}
$$
where the last step follows from Sylvester's determinant theorem.
\end{proof}
\end{lemma}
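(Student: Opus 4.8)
The plan is to compute the mutual information by first noting that, since $\eqbf{e}$ is a fixed deterministic vector carrying no randomness, it contributes nothing to the information, so $I(\eqbf{\theta}; \eqbf{e}, r) = I(\eqbf{\theta}; r)$. I would then decompose the mutual information in the order that keeps all Gaussians one-dimensional, namely $I(\eqbf{\theta}; r) = h(r) - h(r \mid \eqbf{\theta})$, where $h$ denotes differential entropy. This route avoids any $d$-dimensional determinant entirely and is, I think, the cleanest way to reach the answer.

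First I would compute the conditional entropy. Given $\eqbf{\theta}$, the reward $r = \eqbf{e}^\top \eqbf{\theta} + \xi$ is merely a deterministic translate of the independent noise $\xi \sim \mathcal{N}(0, \noise^2)$, so $h(r \mid \eqbf{\theta}) = h(\xi) = \tfrac{1}{2}\log(2\pi e\, \noise^2)$. Next I would identify the marginal law of $r$: since $\eqbf{\theta} \sim \mathcal{N}(\eqbf{\mu}, \eqbf{\Sigma})$ and $\xi$ is independent Gaussian, $r$ is a linear image of a Gaussian plus independent Gaussian noise, hence $r \sim \mathcal{N}\!\left(\eqbf{e}^\top \eqbf{\mu},\ \eqbf{e}^\top \eqbf{\Sigma}\, \eqbf{e} + \noise^2\right)$, giving $h(r) = \tfrac{1}{2}\log\!\left(2\pi e\,(\eqbf{e}^\top \eqbf{\Sigma}\, \eqbf{e} + \noise^2)\right)$. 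Subtracting, the $2\pi e$ factors cancel and I arrive directly at $\tfrac{1}{2}\log\!\left(1 + \eqbf{e}^\top \eqbf{\Sigma}\, \eqbf{e}/\noise^2\right)$, as claimed.

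The alternative decomposition $I = h(\eqbf{\theta}) - h(\eqbf{\theta} \mid \eqbf{e}, r)$ — the form the standard Gaussian differential-entropy formula invites — is equally valid but more laborious: it requires the Gaussian-conjugate posterior covariance $\left(\eqbf{\Sigma}^{-1} + \eqbf{e}\eqbf{e}^\top/\noise^2\right)^{-1}$ and then the determinant ratio $\det\!\left(\eqbf{I}_d + \eqbf{\Sigma}\eqbf{e}\eqbf{e}^\top/\noise^2\right)$. The only genuinely nontrivial step in that matrix route is recognizing that $\eqbf{\Sigma}\eqbf{e}\eqbf{e}^\top$ is rank one, so that Sylvester's determinant identity (equivalently the matrix determinant lemma) collapses the $d \times d$ determinant to the scalar $1 + \eqbf{e}^\top \eqbf{\Sigma}\, \eqbf{e}/\noise^2$. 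I expect that determinant collapse to be the main obstacle in the matrix approach; my preferred scalar route sidesteps it completely, leaving as the only points to verify carefully that the determinism of $\eqbf{e}$ licenses $I(\eqbf{\theta}; \eqbf{e}, r) = I(\eqbf{\theta}; r)$ and that the marginal variance of $r$ is indeed $\eqbf{e}^\top \eqbf{\Sigma}\, \eqbf{e} + \noise^2$.
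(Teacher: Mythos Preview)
Your proof is correct and actually cleaner than the paper's. The paper takes exactly the alternative route you describe: it decomposes $I(\eqbf{\theta};\eqbf{e},r)=h(\eqbf{\theta})-h(\eqbf{\theta}\mid \eqbf{e},r)$, inserts the $d$-dimensional Gaussian entropy formula with the conjugate posterior covariance $\left(\eqbf{\Sigma}^{-1}+\eqbf{e}\eqbf{e}^\top/\noise^2\right)^{-1}$, and then invokes Sylvester's determinant theorem to collapse the rank-one determinant to the scalar $1+\eqbf{e}^\top\eqbf{\Sigma}\,\eqbf{e}/\noise^2$. Your choice to decompose in the other order, $I=h(r)-h(r\mid\eqbf{\theta})$, keeps every entropy one-dimensional and bypasses the determinant lemma entirely; the only facts you need are the marginal variance of $r$ and translation invariance of differential entropy. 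The paper's route has the minor advantage of surfacing the posterior covariance that is used again in the online update \eqref{eq:update_dis}, but as a self-contained proof of the lemma your scalar argument is more elementary and arguably preferable.
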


\begin{proof}[Proof of theorem \ref{thm:ucb}]
Note that $\hat{r}^{(t)}_{u,i} = \eiTOnline  {\eutOnline}$ is distributed as $\mathcal{N}\left(\eiTOnline  \eqbf{\mu}_{t}, \eiTOnline \eqbf{\Sigma}_{t} \eiOnline\right)$
By the Chernoff bound,
$$
\begin{aligned}
P\left(\left|\hat{r}^{(t)}_{u,i}-\mathbb{E} \hat{r}^{(t)}_{u,i}\right| \geq \frac{\Gamma_{t}}{2} \sqrt{I\left(u;i, r_{u, i}\right)}\right) & \leq 2 \exp \left(-\frac{\left(\frac{\Gamma_{t}}{2} \sqrt{I\left(u;i, r_{u, i}\right)}\right)^2}{2 \eiTOnline \eqbf{\Sigma}_{t} \eiOnline}\right) \\
& =2 \exp \left(-\frac{\Gamma_{t}^2 I\left(u;i, r_{u, i}\right)}{8 \eiTOnline \eqbf{\Sigma}_{t} \eiOnline}\right) \\
& \leq 2 \exp \left(-\frac{\lambda_t}{\log \left(1+\frac{\lambda_t}{\noise^2}\right)} \frac{\log \left(1 + {\eiTOnline \eqbf{\Sigma}_{t} \eiOnline}/{\noise^2}\right)}{\eiTOnline \eqbf{\Sigma}_{t} \eiOnline} \log \frac{2|\mathcal{A}_t|}{\delta}\right) \\
& \leq \frac{\delta}{|\mathcal{A}_t|},
\end{aligned}
$$
where the last inequality follows from the monotonicity of $\frac{x}{\log (1+x)}$ for $x>0$ and the fact that $ \lambda_t = \max _{i \in \mathcal{A}_t} \eiTOnline \eqbf{\Sigma}_{t} \eiOnline  \geq 
 \eiTOnline \eqbf{\Sigma}_{t} \eiOnline  $. Applying a union bound over actions gives
$$
P\left(\left|\hat{r}^{(t)}_{u,i}-\mathbb{E} \hat{r}^{(t)}_{u,i}\right| \leq \frac{\Gamma_{t}}{2} \sqrt{I\left(u;i, r_{u, i}\right)} , \forall i \in \mathcal{A}_t\right) \geq 1 - \delta.
$$
\end{proof}

\subsection{Proof of lemma \ref{lemma:event}}
\begin{proof}[Proof of lemma \ref{lemma:event}]
For brevity, in this part, we denote $\eqbf{\mu_k}$ as ${\Emean g_{u_k}}$.
We analyze the three events $\mathcal{E}_\theta$, $\mathcal{E}_m$, and $\mathcal{E}_s$ separately.

In terms of $\mathcal{E}_\theta$, let $ \eqbf{z} = \eqbf{\Sigma}_*^{-\frac{1}{2}}\left( \euOnline - \eqbf{\mu}_*\right)$, we have $ \eqbf{z} \sim \mathcal{N}\left(0, \eqbf{I_d} \right)$,
$$
\begin{aligned}
P\left(\bar{\mathcal{E}}_\theta\right) 
& = P\left(\exists i \in \left[ d \right], z_i^2 \geqslant 2 \ln \frac{d^2 T}{\delta}\right) \\
& \leq d \cdot P\left(z_1^2 \geqslant 2 \ln \frac{d^2 T}{\delta}\right) \\
& \leq \frac{2 \delta}{d T}. \\
\end{aligned}
$$
The first inequality is due to the countable subadditivity of probability measure, and the last inequality is because $z_1$ is also a $1$-sub-Gaussian distribution, $P(z_1 > t) \leq \exp\left(-\frac{1}{2}t^2\right)$ for $t>0$ holds.
With a probability of at least $1 - \frac{2 \delta}{dT}$, the following inequality holds:
\begin{equation}
\label{eq::loc::e_theta}
\left\|\eqbf{\Sigma}_*^{-\frac{1}{2}}\left(\euOnline-\eqbf{\mu}_*\right)\right\|_{\infty}^2 \leqslant 2 \ln \left(\frac{d^2 T}{\delta}\right).
\end{equation}

In terms of $\mathcal{E}_m$, for any unit vector $\|\eqbf{\nu}\| = 1, s\in \mathbb{R} $,  we have
\begin{equation}
\label{eq:loc:sg}
\begin{aligned}
\mathbb{E}\left[ s \cdot \eqbf{\nu}^\top({\eqbf{\mu}_{\text {meta }} - \eqbf{\mu}^*}) \right] 
& = 
\mathbb{E}\left[
\exp \left( \frac{s}{M} \eqbf{\nu^\top} \sum_{i=1}^M \left( {\eqbf{\mu}_i - \eqbf{\mu}}^*  \right) \right) \right] \\
&= \prod_{i=1}^M \mathbb{E}\left[exp\left({\frac{s}{M} \cdot \eqbf{\nu^{\top}}\left({\eqbf{\mu}_i - \eqbf{\mu}}^*\right)}\right)\right] \\
&  \underset{(a)}{=} \prod_{i=1}^M \exp \left\{ \frac{1}{M^2} \cdot \eqbf{\nu^{\top}} \eqbf{\Sigma}_* \eqbf{\nu} \cdot \frac{s^2}{2}\right\} \\
& =\exp \left\{\frac{s^2}{2} \frac{\eqbf{\nu^{\top}} \eqbf{\Sigma}_* \eqbf{\nu}}{M}\right\} \\
& \underset{(b)}{\leqslant} \exp \left\{\frac{s^2}{2} \cdot \frac{\Bar{\lambda}}{M}\right\}, \\
\end{aligned}
\end{equation}
where (a) uses the the MGF of a Gaussian distribution with 
$
\frac{1}{M} \eqbf{\nu^{\top}\left({\eqbf{\mu}_i - \eqbf{\mu}}^*\right)} \sim \mathcal{N}\left(0, \frac{1}{M^2} \cdot \eqbf{\nu^{\top} \eqbf{\Sigma}_* \nu}\right)
$,
(b) is because $\Bar{\lambda}$ is the largest eigenvalue of $\eqbf{\Sigma}_{*}$.

From above equation \eqref{eq:loc:sg}, we get $\eqbf{\mu_{\text {meta }}-\mu^*} \sim \operatorname{subG} \left[\sqrt{\frac{\Bar{\lambda}}{M}}\right]$.
By lemma \ref{lemma::aux::subG}, select A as identity matrix in lemma \ref{lemma::aux::subG}, we have
$$
\mathbb{P}\left(\|\eqbf{\mu}_\text{meta} - \eqbf{\mu}_*\|^2  > \frac{\Bar{\lambda}}{M} \left(d + 2\sqrt{d \ln{\frac{1}{\delta}}} + 2 \ln{\frac{1}{\delta}}\right)\right) \leq \delta.
$$
With a probability of at least $1 - \frac{\delta}{dT}$, the following inequality holds:
\begin{equation}
\label{eq:mean_assign}
\|\eqbf{\mu}_\text{meta} - \eqbf{\mu}_*\|  \leq \sqrt{\frac{\Bar{\lambda}}{M}\left( 2d + 3\ln{\frac{d T}{\delta}} \right)}.
\end{equation}

In terms of $\mathcal{E}_s$, as $\eqbf{\mu_i} - \eqbf{\mu}_\text{meta} \sim \mathcal{N}\left(0, \frac{M-1}{M} \eqbf{\Sigma}_* \right)$, we have $\eqbf{\mu_i} - \eqbf{\mu}_\text{meta} \sim subG\left(\sqrt{\Bar{\lambda}}\right)$, 
by lemma \ref{lemma::aux::covariance}, with a probability of at least $1- \frac{6\delta}{dT}$, the following inequality holds:
\begin{equation}
\|{\boldsymbol{\Sigma}}_\text{meta} - \boldsymbol{\Sigma}_*\|_{\mathrm{op}} \leq 
32 \Bar{\lambda} \cdot \max \left\{\sqrt{\frac{5d + 2 \ln \left(\frac{dT}{3\delta}\right)}{M}}, \frac{5 d+2 \ln \left(\frac{dT}{3\delta}\right)}{M}\right\}.
\end{equation}
Choose $\delta = \frac{1}{M}$,  for $M \geq 5d + 2 \ln \left(\frac{dMT}{3}\right)$,  by lemma \ref{lemma::aux::wideop}, we have 
\begin{equation}
\label{eq:cov_assign}
\|\AlgPriorCov-\boldsymbol{\Sigma}_*\|_{\mathrm{op}} \leq 
64 \Bar{\lambda} \cdot \sqrt{\frac{5d + 2 \ln \left(\frac{dMT}{3}\right)}{M}},
\quad \AlgPriorCov > \boldsymbol{\Sigma}_*.
\end{equation}

By combining equations \eqref{eq::loc::e_theta}, \eqref{eq:mean_assign}, and \eqref{eq:cov_assign}, set 
\begin{equation}
    \label{eq:fms}
    \begin{aligned}
& \delta = 1/M, \\
& f_m =  {{\Bar{\lambda}}\left(2d + 3\ln{{\left(d M T\right)}} \right)}, \\
& f_s =  (64 \Bar{\lambda})^2 \left({{5d + 2 \ln \left(\frac{dMT}{3}\right)}}\right), \\
\end{aligned}
\end{equation}
we can obtain the desired result.

\end{proof}

\subsection{Proof of theorem \ref{thm:regret}}
In previous work \cite{peleg2022metalearning}, the algorithm setting is random exploration of initial $\tau$ rounds.
This is a major difference between the algorithm we use in online and the existed method. 
Here, we first decompose the regret into two stages: the first $\tau$ rounds and the remaining $T - \tau$ rounds. We will analyze these two terms separately. 
After establishing the relationship between the regret of inaccurate priors and correct priors, we can then combine this with the standard conclusions of Bayesian bandit regret to obtain the main results.

For brevity, in this part, we denote $ \mathcal{N}(\eqbf{{\mu}}_0, \AlgPriorCov)$ 
as algorithm prior and $\mathcal{N}(\eqbf{{\mu}}_t, \eqbf{{\Sigma}}_t)$ represents the posterior in round $t$ when the algorithm is initialized with prior $\mathcal{N}(\eqbf{{\mu}}_0, \AlgPriorCov)$, and $\mathcal{N}(\eqbf{{\mu}}_{*,t}, \eqbf{{\Sigma}}_{*, t})$ represents the posterior in round $t$ when the algorithm is initialized with the correct prior $\mathcal{N}(\eqbf{{\mu}}_{*}, \eqbf{{\Sigma}}_{*})$.

\begin{proof}[Proof of theorem \ref{thm:regret}]

We first decompose the regret into two stages: the initial $\tau$ rounds and the remaining $T - \tau$ rounds.

\begin{equation}
\label{eq:RT tau}
\begin{aligned}
\text{Reg}_{\text{Bay}}(\eqbf{{\mu}}_0, \AlgPriorCov, T) &= \text{Reg}_{\text{Bay}}(\eqbf{{\mu}}_0, \AlgPriorCov, \tau) + \text{Reg}_{\text{Bay}}(\eqbf{{\mu}}_{\tau + 1}, \AlgPriorCovTau, T - \tau). \\    
\end{aligned}
\end{equation}

For the initial $\tau$ rounds, intuitively, as we have not yet observed a sufficient amount of interaction data, we make the following worst-case estimation for this part.

\begin{equation}
\label{eq:tau1}
\begin{aligned}
\text{Reg}_{\text{Bay}}(\eqbf{{\mu}}_0, \AlgPriorCov, \tau)  
& \underset{(a)}{\leq} {\mathbb{E}}_{\euOnline}\left[\sum_{t=1}^\tau \max _{a \in \mathcal{A}_t}\left\{\eqbf{{e}}_{i_a}^{* \top} \euOnline\right\} \right]-{\mathbb{E}}_{\euOnline}\left[\sum_{t=1}^\tau \min _{a \in \mathcal{A}_t}\left\{\eqbf{{e}}_{i_a}^{* \top} \euOnline\right\} \right] \\
& \leq 2 \mathbb{E}_{\euOnline}\left[\sum_{t=1}^\tau \max _{a \in \mathcal{A}_t}\left\{\left|\eqbf{{e}}_{i_a}^{* \top} \euOnline\right|\right\} \right] \\
& \underset{(b)}{\leq} 2 \mathbb{E}_{\euOnline}\left[\sum_{t=1}^\tau \max _{a \in \mathcal{A}_t}\{\| \eqbf{{e}}^{*}_{i_a} \| \cdot \|\euOnline\|\} \right] \\
& \underset{(c)}{\leq} 2 a \tau {\mathbb{E}}_{\euOnline}\left[\|\euOnline\| \right],
\end{aligned}
\end{equation}
where $(a)$ is the maximal regret of any algorithm, $(b)$ uses Cauchy-schwarz inequality and $(c)$ uses Assumption \ref{ass:bound}. Denote $\eqbf{Z} \triangleq \eqbf{\Sigma}_*^{-1 / 2}\left(\euOnline-\eqbf{\mu}_*\right)$ and analyzing the expectation,

\begin{equation}
    \label{eq:tau2}
\begin{aligned}
\mathbb{E}\left[\|\euOnline\| \right] & \underset{(a)}{\leq}\left\|\eqbf{\mu}_*\right\|+\mathbb{E}\left[\left\|\euOnline-\eqbf{\mu}_*\right\| \right] \\
& \underset{(b)}{\leq} m+\sqrt{\bar{\lambda}} \mathbb{E}\left[\|\eqbf{Z}\| \right] \\
& =m+\sqrt{\bar{\lambda}} \mathbb{E}\left[\sqrt{\sum_{i=1}^d Z_i^2} \right] \\
& \underset{(c)}{\leq} m+\sqrt{\bar{\lambda}} \sqrt{\sum_{i=1}^d \mathbb{E}\left[Z_i^2 \right],}\\
& = m + \sqrt{\bar{\lambda} d},  \\
\end{aligned}
\end{equation}
where $(a)$ uses the triangle inequality, $(b)$ uses Lemma \ref{lemma::aux::max_eige}, and $(c)$ uses Jensen inequality.
Combine equation \eqref{eq:tau1} and \eqref{eq:tau2}, we can get 
\begin{equation}
    \label{eq:tau regret}
\text{Reg}_{\text{Bay}}(\eqbf{{\mu}}_0, \AlgPriorCov, \tau) \leq k_2\tau,
\end{equation}    
where $k_2 = 2a \left(m + \sqrt{\bar{\lambda} d}\right)$.

For the second part, combining the definition \ref{def:tau} of $\tau$ and lemma \ref{lemma:event}, we can refer to previous results to provide a connection between the regret of inaccurate priors and correct priors. 

\begin{equation}
\label{eq:T-tau}
\begin{aligned}
\text{Reg}_{\text{Bay}}(\eqbf{{\mu}}_{\tau + 1}, \AlgPriorCovTau, T - \tau) 
& \underset{(a)}{\leq} (1 + k_1) \text{Reg}_{\text{Bay}}({\eqbf{\mu}}_{*,\tau + 1}, \eqbf{\Sigma}_{*,\tau + 1}, T - \tau) + \frac{c_{bad}\delta}{\sqrt{d}}\\
& \leq (1 + k_1) \text{Reg}_{\text{Bay}}({\eqbf{\mu}_{*}}, \eqbf{\Sigma}_{*}, T) + \frac{c_{bad}\delta}{\sqrt{d}}\\
& \underset{(b)}{\leq} (1 + k_1) \left( \Gamma \sqrt{ \frac{1}{2}Td \log \left(1+\frac{\Bar{\lambda}T}{\noise^2}\right)} + B\right)  + \frac{c_{bad}\delta}{\sqrt{d}},\\
\end{aligned}
\end{equation}
where  (a) uses lemma \ref{lemma:event} and lemma \ref{lemma:meta}, and (b) uses Lemma \ref{lemma:bayesian regret}.
By combining equations \eqref{eq:RT tau}, \eqref{eq:tau regret}, and \eqref{eq:T-tau}, we can obtain the desired result.

\end{proof}

\begin{lemma}[Theorem 1 in \cite{peleg2022metalearning}]
\label{lemma:meta} 
Let $\euOnline \sim \mathcal{N}\left(\eqbf{\mu}_*, \boldsymbol{\Sigma}_*\right)$ and let $\mathcal{N}(\eqbf{{\mu}}_0, {\boldsymbol{\Sigma}_0})$ be the prior. For $\tau<T$, if for some $0<\delta \leq 1 / c_{\delta}$ the event $\mathcal{E}$ holds with probability larger than $1 - \frac{9 \delta}{d T}$, then the regret is bounded by,

\begin{equation}
    \text{Reg}_{\text{Bay}}(\eqbf{{\mu}}_{\tau + 1}, {\Sigma}_{\tau + 1}, T - \tau) 
    \leq (1 + k_1) \text{Reg}_{\text{Bay}}({\eqbf{\mu}}_{*,\tau + 1}, \eqbf{\Sigma}_{*,\tau + 1}, T - \tau) + \frac{c_{bad}\delta}{\sqrt{d}},
\end{equation}
where 
$$
\begin{aligned}
& c_{\delta}=\max \left\{3, c_s^2 \tau^2 f_s, 18 c_{\xi}^2 c_s\left(f_m+\left(c_1 d+c_{\xi}^2 c_s / 36\right) f_s\right)\right\}, \\
& k_1=12 \sqrt{c_{\xi}^2 c_s} \sqrt{f_m \delta}+\left(c_s \tau+12 \sqrt{c_{\xi}^2 c_s c_1 d}+2 c_{\xi}^2 c_s\right) \sqrt{f_s \delta}, \\
& c_s=\frac{2 \noise^2}{\underline{\lambda}^2 \lambda_{\boldsymbol{\Sigma}_{{A}}}}, \quad c_{\xi}=\sigma \sqrt{5 \ln \left(\frac{d T}{\delta}\right)}, \quad c_1=\frac{2}{\underline{\lambda}} \ln \left(\frac{d^2 T}{\delta}\right), \quad c_{\text {bad }}=22 a\left(m+\sqrt{4 \bar{\lambda} \ln \left(\frac{d^2 T}{\delta}\right)}\right) .
\end{aligned}
$$

\end{lemma}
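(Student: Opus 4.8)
The plan is to prove Lemma \ref{lemma:meta} as a \emph{transfer} result that controls the Bayesian regret run under the inaccurate algorithm prior $\mathcal{N}(\eqbf{\mu}_0,\AlgPriorCov)$ by the regret run under the true prior $\mathcal{N}(\eqbf{\mu}_*,\eqbf{\Sigma}_*)$, adapting the meta-learning argument of \cite{peleg2022metalearning} to our notion of sufficient rounds (Definition \ref{def:tau}), which replaces their forced random exploration. The first step is to split the regret over rounds $\tau+1,\dots,T$ according to whether the good event $\mathcal{E}=\mathcal{E}_1\cap\mathcal{E}_2\cap\mathcal{E}_3$ holds. On the complement $\bar{\mathcal{E}}$, whose probability is at most $9\delta/(dT)$ by hypothesis, I bound the contribution by the maximal per-round regret (at most $2a\,\|\euOnline\|$ by Cauchy--Schwarz and Assumption \ref{ass:bound}(a)) times the horizon $T$, and collapse the product using a sub-Gaussian deviation bound on $\euOnline$ at the scale $\sqrt{4\bar\lambda\ln(d^2T/\delta)}$. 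Since the $\ell_2$ deviation contributes a $\sqrt{d}$ factor while the probability carries a $1/d$, this yields exactly the additive term $c_{bad}\delta/\sqrt{d}$ with $c_{bad}=22a(m+\sqrt{4\bar\lambda\ln(d^2T/\delta)})$.

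The core of the argument is the multiplicative $(1+k_1)$ bound on the good event. Here I would couple Algorithm \ref{alg:online} under the inaccurate prior to an idealized run under the correct prior sharing the same realization $\euOnline$ and the same noise. Along the realized action trajectory $(i_s)$ of the inaccurate-prior algorithm, both posteriors accumulate the \emph{same} information matrix $\eqbf{V}_t=\sum_s \eqbf{e}^*_{i_s}\eqbf{e}^{*\top}_{i_s}$, so the precision update \eqref{eq:update_dis} gives $\eqbf{\Sigma}_t^{-1}-\eqbf{\Sigma}_{*,t}^{-1}=\AlgPriorCov^{-1}-\eqbf{\Sigma}_*^{-1}$, a $t$-independent quantity bounded in operator norm on $\mathcal{E}_2$ by $\sqrt{c_2\delta}=\sqrt{f_s\delta}$. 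Definition \ref{def:tau} then guarantees $\lambda_{\min}(\eqbf{V}_t)\ge \lambda_{\eqbf{\Sigma_A}}d/2$ for all $t>\tau$, so the prior discrepancy is damped by the factor $\noise^2/\lambda_{\min}(\eqbf{V}_t)$; this is where the scale $c_s=2\noise^2/(\underline\lambda^2\lambda_{\eqbf{\Sigma_A}})$ enters. Propagating through the mean update $\eqbf{\mu}_t=\eqbf{\Sigma}_t(\AlgPriorCov^{-1}\eqbf{\mu}_0+\noise^{-2}\sum_s r_s\eqbf{e}^*_{i_s})$, and invoking $\mathcal{E}_1$ ($\|\eqbf{\mu}_0-\eqbf{\mu}_*\|\le\sqrt{c_1\delta}=\sqrt{f_m\delta}$) together with the noise tail $c_\xi=\sigma\sqrt{5\ln(dT/\delta)}$ that controls the $r_s$, bounds both $\|\eqbf{\mu}_t-\eqbf{\mu}_{*,t}\|$ and $\|\eqbf{\Sigma}_t-\eqbf{\Sigma}_{*,t}\|_{\mathrm{op}}$ by terms linear in $\sqrt{f_m\delta}$ and $\sqrt{f_s\delta}$.

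With these posterior discrepancies in hand, I would transfer the per-round regret by showing that the UCB index \eqref{eq:UCB} under the inaccurate prior, for every candidate action in $\mathcal{I}_{u,t}$, is at most $(1+k_1)$ times its counterpart under the correct prior, so the optimism-in-the-face-of-uncertainty guarantee carries over to the true-prior regret. The ordering $\AlgPriorCov\succeq\eqbf{\Sigma}_*$ supplied by $\mathcal{E}_2$ is essential: it forces $\eqbf{\Sigma}_t\succeq\eqbf{\Sigma}_{*,t}$, ensuring the inaccurate confidence widths never underestimate the true ones (so optimism is preserved), while the damped mean/covariance discrepancies keep the inflation within $1+k_1$. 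Collecting the first-order contributions in $\sqrt{f_m\delta}$ and $\sqrt{f_s\delta}$, weighted by $c_s$, $c_\xi$, $c_1$ and $\tau$, reproduces $k_1=12\sqrt{c_\xi^2 c_s}\sqrt{f_m\delta}+(c_s\tau+12\sqrt{c_\xi^2 c_s c_1 d}+2c_\xi^2 c_s)\sqrt{f_s\delta}$; the hypothesis $\delta\le 1/c_\delta$ is precisely what keeps every damping factor below one so that $k_1$ remains a genuine lower-order correction.

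The main obstacle I anticipate is the third step: converting the posterior closeness into a \emph{multiplicative} $(1+k_1)$ bound rather than an additive one, uniformly over the candidate sets and accumulated over all $T-\tau$ rounds. This requires the confidence-width ratio between the two posteriors to be bounded by $1+k_1$ simultaneously at every round, which hinges on the sufficient-exploration condition of Definition \ref{def:tau} holding throughout $t>\tau$. The delicate point in adapting \cite{peleg2022metalearning} to our setting is verifying that our UCB-driven $\tau$, rather than their forced random exploration, still secures $\lambda_{\min}(\eqbf{V}_t)\ge\lambda_{\eqbf{\Sigma_A}}d/2$ for all subsequent rounds, and that the coupling between the inaccurate-prior trajectory and the true-prior regret remains valid so that the factorization constants line up as stated.
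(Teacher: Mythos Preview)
The paper does not prove Lemma \ref{lemma:meta}: it is quoted as Theorem~1 of \cite{peleg2022metalearning} and used as a black box in step~(a) of equation~\eqref{eq:T-tau} inside the proof of Theorem~\ref{thm:regret}. There is thus no in-paper argument to compare your proposal against; what you have written is a reconstruction of the cited external result.

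As a reconstruction, your outline is structurally sound and matches what one expects of the argument in \cite{peleg2022metalearning}: split on the good event $\mathcal{E}$, bound the bad-event contribution crudely via the maximal per-round regret to produce the additive $c_{bad}\delta/\sqrt{d}$ term, and on $\mathcal{E}$ control the discrepancy between the inaccurate-prior and true-prior posteriors to obtain the multiplicative $(1+k_1)$ factor. One of your stated worries is unnecessary: the concern that $\lambda_{\min}(\eqbf{V}_t)\ge\lambda_{\eqbf{\Sigma_A}}d/2$ might fail for some $t>\tau$ is unfounded, because $\eqbf{V}_t=\sum_{s\le t}\eqbf{e}^*_{i_s}\eqbf{e}^{*\top}_{i_s}$ is nondecreasing in the positive-semidefinite order, so its smallest eigenvalue is nondecreasing in $t$ and the threshold, once reached at $\tau$, is maintained forever after. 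Your other concern---that \cite{peleg2022metalearning} assumes the first $\tau$ rounds are forced random exploration, whereas here $\tau$ is the UCB-driven stopping time of Definition~\ref{def:tau}---is legitimate and is in fact glossed over by the paper itself, which invokes the cited theorem without re-deriving it under the modified $\tau$; any complete proof would need to check that the constants in $k_1$ and $c_\delta$ survive this change, as you note.
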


\begin{lemma}[Proposition 6 and Lemma 7 in \cite{lu2019information}]
\label{lemma:bayesian regret}
Under the assumptions and notation in section \ref{sec:thm}, 
the Bayesian regret of UCB over T periods is
$$
\text{Reg}_{\text{Bay}}({\eqbf{\mu}_{*}}, \eqbf{\Sigma}_{*}, T) 
\leq \Gamma \sqrt{ \frac{1}{2}Td \log \left(1+\frac{\Bar{\lambda}T}{\noise^2}\right)} + B, 
$$
where $ B = a \left(m + \sqrt{\Bar{\lambda} d } \right) $ and
$ \Gamma = 4 \sqrt{\frac{\Bar{\lambda}}{\log \left(1+\frac{\Bar{\lambda}}{\noise^2}\right)} \log (4 N T)}.$    
\end{lemma}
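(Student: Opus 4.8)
The plan is to carry out the information-theoretic analysis of Bayesian UCB (the argument behind \cite{lu2019information}) specialised to the linear--Gaussian model of Section \ref{sec:thm}, where the algorithm is initialised with the \emph{true} prior $\mathcal{N}(\eqbf{\mu}_*,\eqbf{\Sigma}_*)$; at each round I would split off a ``good'' event on which the confidence bound of Theorem \ref{thm:ucb} holds from its low-probability complement. For the good event, conditioning on the history and using that the true $\eqbf{e}_u$ and the sample $\eqbf{e}_{u,t}$ behind $\hat r^{(t)}_{u,i}$ share the posterior law $\mathcal{N}(\eqbf{\mu}_t,\eqbf{\Sigma}_t)$, Theorem \ref{thm:ucb} yields $|\eqbf{e}_u^{\top}\eqbf{e}^\ast_i-\eqbf{\mu}_t^{\top}\eqbf{e}^\ast_i|\le\tfrac{\Gamma_t}{2}\sqrt{I(u;i,r_{u,i})}$ uniformly over $i\in\mathcal{A}_t$ with probability $\ge 1-\delta$, so $\operatorname{UCB}_\delta(\hat r^{(t)}_{u,i})$ from \eqref{eq:UCB} dominates $\eqbf{e}_u^{\top}\eqbf{e}^\ast_i$ and optimism of the greedy rule $i_t$ collapses the instantaneous regret to $\eqbf{e}_u^{\top}\eqbf{e}^\ast_{i_t^\ast}-\eqbf{e}_u^{\top}\eqbf{e}^\ast_{i_t}\le\operatorname{UCB}_\delta(\hat r^{(t)}_{u,i_t})-\eqbf{e}_u^{\top}\eqbf{e}^\ast_{i_t}\le\Gamma_t\sqrt{I(u;i_t,r_{u,i_t})}$. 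Since the update \eqref{eq:update_dis} is a rank-one shrinkage, $\eqbf{\Sigma}_t\preceq\eqbf{\Sigma}_*$, so Assumption \ref{ass:bound} bounds $\lambda_t$ by a constant multiple of $\bar\lambda$ and, using that $x\mapsto x/\log(1+x)$ is increasing and $|\mathcal{A}_t|\le N$, one obtains $\Gamma_t\le\Gamma$ once $\delta$ is polynomially small so that $\log(2|\mathcal{A}_t|/\delta)$ is absorbed into the $\log(4NT)$ in $\Gamma$.

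Next I would sum over $t$ and apply Cauchy--Schwarz, $\sum_t\Gamma\sqrt{I(u;i_t,r_{u,i_t})}\le\Gamma\sqrt{T\sum_t I(u;i_t,r_{u,i_t})}$, and telescope the filtered mutual informations via the chain rule: $\sum_{t=1}^T I(u;i_t,r_{u,i_t})=I\!\left(\eqbf{e}_u;(i_1,r_1),\dots,(i_T,r_T)\right)=\tfrac12\log\det(\eqbf{\Sigma}_*\eqbf{\Sigma}_T^{-1})$, where $\eqbf{\Sigma}_T^{-1}=\eqbf{\Sigma}_*^{-1}+\sigma_{noise}^{-2}\sum_t\eqbf{e}^\ast_{i_t}\eqbf{e}^{\ast\top}_{i_t}$ by \eqref{eq:update_dis} and each summand is exactly the Gaussian evaluation of Lemma \ref{lemma:m_info}. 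Applying AM--GM to the eigenvalues of $\eqbf{\Sigma}_*\eqbf{\Sigma}_T^{-1}$ with $\|\eqbf{e}^\ast_{i_t}\|\le a$ and $\lambda_{\max}(\eqbf{\Sigma}_*)\le\bar\lambda$ bounds this log-determinant by $\tfrac d2\log(1+\bar\lambda T/\sigma_{noise}^2)$ (up to the boundedness constant $a$), giving the leading term $\Gamma\sqrt{\tfrac12 Td\log(1+\bar\lambda T/\sigma_{noise}^2)}$.

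It then remains to account for the failure events. On the complement at round $t$ the instantaneous regret is at most $|\eqbf{e}_u^{\top}\eqbf{e}^\ast_{i_t^\ast}|+|\eqbf{e}_u^{\top}\eqbf{e}^\ast_{i_t}|\le 2a\|\eqbf{e}_u\|$, and taking expectations with $\mathbb{E}\|\eqbf{e}_u\|\le m+\sqrt{\bar\lambda d}$ (the computation in \eqref{eq:tau2}) and the polynomially small per-round failure probability $\delta$, the accumulated cost over all $T$ rounds is at most $B=a(m+\sqrt{\bar\lambda d})$; adding the two contributions gives the claimed bound.

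The hard part will be calibrating $\delta$: it must be small enough (of order $1/(NT)$) that the total failure cost stays $O(B)$, while the $\log(1/\delta)$ it injects into $\Gamma_t$ must remain within the advertised $\log(4NT)$ factor, so these two requirements have to be balanced carefully and a union bound over rounds and actions managed. A secondary subtlety is establishing $\eqbf{\Sigma}_t\preceq\eqbf{\Sigma}_*$ uniformly in $t$, which is what lets the information-gain bound be stated with $\bar\lambda$ rather than a data-dependent quantity.
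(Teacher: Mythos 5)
The paper does not actually prove this lemma: it is imported verbatim by citation from Proposition~6 and Lemma~7 of \cite{lu2019information}, and the appendix only states it for use in the proof of Theorem~\ref{thm:regret}. Your blind reconstruction is, in substance, exactly the argument behind that cited result: optimism via the confidence bound of Theorem~\ref{thm:ucb} (applied to the true $\eqbf{e}_u$, which shares the posterior law $\mathcal{N}(\eqbf{\mu}_t,\eqbf{\Sigma}_t)$ when the algorithm starts from the correct prior), Cauchy--Schwarz over rounds, the chain rule collapsing $\sum_t I(u;i_t,r_{u,i_t})$ to $\tfrac12\log\det(\eqbf{\Sigma}_*\eqbf{\Sigma}_T^{-1})$, an AM--GM eigenvalue bound giving $\tfrac d2\log(1+\bar\lambda T/\sigma_{noise}^2)$, and a worst-case charge of $2a\,\mathbb{E}\|\eqbf{e}_u\|\le 2B$ on the per-round failure event with $\delta$ of order $1/T$ so that $\log(2|\mathcal{A}_t|/\delta)\le\log(4NT)$ and the accumulated failure cost is $B$. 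Two small calibration points you already flag are worth making explicit if this were written out: (i) $\lambda_t=\max_i\eqbf{e}_i^{\ast\top}\eqbf{\Sigma}_t\eqbf{e}_i^\ast\le a^2\bar\lambda$ rather than $\bar\lambda$, so the stated $\Gamma$ is only literally correct when $a\le 1$ (or with $a^2$ absorbed into the constant), and likewise the trace bound in the log-determinant step carries an $a^2$ and a $1/d$ that the displayed $\log(1+\bar\lambda T/\sigma_{noise}^2)$ suppresses; (ii) the monotone shrinkage $\eqbf{\Sigma}_t\preceq\eqbf{\Sigma}_*$ follows immediately from the rank-one information updates in \eqref{eq:update_dis}, so that subtlety is benign. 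With those caveats your proposal is a faithful and correct proof of the lemma the paper merely cites.
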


\section{Auxiliary Lemmas}

\begin{lemma}[Concentration bound sub-Gaussian vector, Theorem 2.1 in \cite{hsu2012tail}]
\label{lemma::aux::subG}
Let $\eqbf{A} \in \mathbb{R}^{m \times n}$ be a matrix, and let $\boldsymbol{\Sigma}=\eqbf{A}^{\top} \eqbf{A}$. Suppose that $X=\left(X_1, \ldots, X_n\right)$ is a random vector such that $ \mathbb{E}[X] = 0$ and for some $\sigma \geq 0$
$$
\mathbb{E}\left[\exp \left(U^{\top} X\right)\right] \leq \exp \left(\frac{\|U\|^2 \sigma^2}{2}\right),
$$
for all $U \in \mathbb{R}^n$. For all $\delta>0$,
$$
\mathbb{P}\left(\|\eqbf{A} X\|^2>\sigma^2\left(\operatorname{Tr}(\boldsymbol{\Sigma})+2 \sqrt{\operatorname{Tr}\left(\boldsymbol{\Sigma}^2\right) \delta}+2\|\boldsymbol{\Sigma}\|_{\mathrm{op}} \delta\right)\right) \leq e^{-\delta}.
$$

\end{lemma}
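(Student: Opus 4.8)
The statement is a verbatim restatement of the Hsu--Kakade--Zhang tail bound, so one legitimate route is simply to invoke \cite{hsu2012tail}; I sketch instead the self-contained argument behind it. The plan is to control the quadratic form $\|\eqbf{A}X\|^2 = X^\top \boldsymbol{\Sigma} X$ by the Chernoff method, after linearizing the squares so that the sub-Gaussian hypothesis on $X$ can be applied directly to a linear form.

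First I would pass to the spectral decomposition $\boldsymbol{\Sigma} = \sum_i \lambda_i v_i v_i^\top$ with $\lambda_i \ge 0$ and orthonormal $v_i$, and write $\|\eqbf{A}X\|^2 = \sum_i \lambda_i Y_i^2$ where $Y_i := v_i^\top X$. Since each $v_i$ is a unit vector, the hypothesis gives $\mathbb{E}[\exp(s Y_i)] \le \exp(s^2\sigma^2/2)$, i.e. each $Y_i$ is $\sigma$-sub-Gaussian (the $Y_i$ may be dependent, which is precisely why a direct chi-square argument is unavailable). The Chernoff step is $\mathbb{P}(\|\eqbf{A}X\|^2 > t) \le e^{-\eta t}\,\mathbb{E}[\exp(\eta \sum_i \lambda_i Y_i^2)]$ for $\eta > 0$, so the crux is to bound the moment generating function of $\sum_i \lambda_i Y_i^2$.

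The key device is a Gaussian (Hubbard--Stratonovich) linearization: for independent $g_i \sim \mathcal{N}(0,1)$ one has $\exp(\eta \lambda_i Y_i^2) = \mathbb{E}_{g_i}[\exp(\sqrt{2\eta\lambda_i}\,Y_i g_i)]$, so that $\mathbb{E}[\exp(\eta\sum_i \lambda_i Y_i^2)] = \mathbb{E}_{g}\mathbb{E}_X[\exp(U^\top X)]$ with $U = \sum_i \sqrt{2\eta\lambda_i}\,g_i v_i$. Applying the sub-Gaussian hypothesis to the inner expectation and using $\|U\|^2 = \sum_i 2\eta\lambda_i g_i^2$ reduces the bound to the MGF of the weighted chi-square $\sum_i \lambda_i g_i^2$, which equals $\prod_i (1 - 2\sigma^2\eta\lambda_i)^{-1/2}$ whenever $2\sigma^2\eta\,\|\boldsymbol{\Sigma}\|_{\mathrm{op}} < 1$. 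Writing $s = \sigma^2\eta$ and applying $-\tfrac12\log(1-x) \le \tfrac{x}{2} + \tfrac{x^2/4}{1-x}$ to each factor yields the clean log-MGF bound $s\operatorname{Tr}(\boldsymbol{\Sigma}) + \frac{s^2\operatorname{Tr}(\boldsymbol{\Sigma}^2)}{1 - 2s\,\|\boldsymbol{\Sigma}\|_{\mathrm{op}}}$, since $\sum_i \lambda_i = \operatorname{Tr}(\boldsymbol{\Sigma})$ and $\sum_i \lambda_i^2 = \operatorname{Tr}(\boldsymbol{\Sigma}^2)$.

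It remains to optimize the exponent $-\eta t + s\operatorname{Tr}(\boldsymbol{\Sigma}) + \frac{s^2\operatorname{Tr}(\boldsymbol{\Sigma}^2)}{1-2s\,\|\boldsymbol{\Sigma}\|_{\mathrm{op}}}$ over the admissible range of $\eta$. Substituting the target threshold $t = \sigma^2(\operatorname{Tr}(\boldsymbol{\Sigma}) + 2\sqrt{\operatorname{Tr}(\boldsymbol{\Sigma}^2)\delta} + 2\|\boldsymbol{\Sigma}\|_{\mathrm{op}}\delta)$ cancels the $\operatorname{Tr}(\boldsymbol{\Sigma})$ term and leaves $-2s\sqrt{\operatorname{Tr}(\boldsymbol{\Sigma}^2)\delta} - 2s\|\boldsymbol{\Sigma}\|_{\mathrm{op}}\delta + \frac{s^2\operatorname{Tr}(\boldsymbol{\Sigma}^2)}{1-2s\,\|\boldsymbol{\Sigma}\|_{\mathrm{op}}}$; a judicious choice of $s$ (of the form $s \asymp \sqrt{\delta/\operatorname{Tr}(\boldsymbol{\Sigma}^2)}$, corrected by a $\|\boldsymbol{\Sigma}\|_{\mathrm{op}}$ factor so that $2s\|\boldsymbol{\Sigma}\|_{\mathrm{op}} < 1$ is respected) drives this expression to at most $-\delta$. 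I expect this final algebraic tuning to be the main obstacle: one must pick $\eta$ so that the two deviation terms dominate the residual chi-square term uniformly in $\delta > 0$, while keeping $\eta$ inside the region $2\sigma^2\eta\,\|\boldsymbol{\Sigma}\|_{\mathrm{op}} < 1$ where the MGF is finite. Everything preceding this step is routine once the Gaussian linearization is in place.
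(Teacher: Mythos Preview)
The paper does not actually prove this lemma: it is listed among the auxiliary lemmas and simply attributed to \cite{hsu2012tail} without any argument. Your sketch is a faithful outline of the Hsu--Kakade--Zhang proof (Gaussian linearization of the quadratic, reduction to a weighted chi-square MGF, then Chernoff optimization), so you are supplying exactly the argument the paper defers to the reference; there is nothing to compare on the paper's side.
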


\begin{lemma}[Empirical covariance bounds, Theorem 6.5 in \cite{wainwright2019high}, constants were taken from \cite{bastani2022meta}]
\label{lemma::aux::covariance}
For any row-wise $\sigma$ sub-Gaussian random matrix $\mathbf{X} \in \mathbb{R}^{n \times d}$, the sample covariance matrix $\hat{\boldsymbol{\Sigma}} =\frac{1}{n} \sum_{i=1}^n X_i X_i^{\top}$ satisfies the bound, $\forall 0<\delta<1$, 
$$
\mathbb{P}\left(\|\hat{\boldsymbol{\Sigma}}-\boldsymbol{\Sigma}\|_{\mathrm{op}} \geq 32 \sigma^2 \cdot \max \left\{\sqrt{\frac{5 d+2 \ln \left(\frac{2}{\delta}\right)}{n}}, \frac{5 d+2 \ln \left(\frac{2}{\delta}\right)}{n}\right\}\right) \leq \delta .
$$
\end{lemma}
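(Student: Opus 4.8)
The plan is to prove this standard operator-norm concentration bound (Theorem 6.5 in \cite{wainwright2019high}) by the classical covering-net plus sub-exponential Bernstein argument. Writing $\eqbf{W} = \hat{\boldsymbol{\Sigma}} - \boldsymbol{\Sigma}$, which is symmetric, I would start from the variational identity $\|\eqbf{W}\|_{\mathrm{op}} = \sup_{v \in S^{d-1}} |v^\top \eqbf{W} v|$ and control the right-hand side in two steps: a pointwise tail bound for each fixed direction $v$, followed by a discretization of the sphere.

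First I would fix a unit vector $v \in S^{d-1}$ and set $Y_i := v^\top X_i$. The row-wise $\sigma$-sub-Gaussian hypothesis makes each $Y_i$ a scalar $\sigma$-sub-Gaussian variable, so $Y_i^2 - \mathbb{E}[Y_i^2]$ is centered and sub-exponential with parameters of order $\sigma^2$. Since $v^\top \eqbf{W} v = \frac{1}{n}\sum_{i=1}^n (Y_i^2 - \mathbb{E}[Y_i^2])$, Bernstein's inequality for sums of independent sub-exponential variables yields a tail of the form
$$
\mathbb{P}\left(|v^\top \eqbf{W} v| \geq \sigma^2 t\right) \leq 2\exp\left(-c\,n\min\{t^2, t\}\right)
$$
for an absolute constant $c$. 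The two regimes captured by $\min\{t^2,t\}$ are exactly the source of the $\sqrt{(\cdot)/n}$ and $(\cdot)/n$ terms appearing inside the claimed $\max\{\cdot,\cdot\}$.

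Next I would discretize. I would pick a $1/4$-net $\mathcal{C}$ of $S^{d-1}$, for which the standard volumetric estimate gives $|\mathcal{C}| \leq 9^d$, and invoke the comparison $\|\eqbf{W}\|_{\mathrm{op}} \leq 2\max_{v \in \mathcal{C}} |v^\top \eqbf{W} v|$ valid for symmetric matrices. Combining the pointwise bound with a union bound over $\mathcal{C}$ replaces the failure probability by $|\mathcal{C}|\cdot 2\exp(-c\,n\min\{t^2,t\})$, i.e.\ it adds $\ln|\mathcal{C}| \leq d\ln 9$ to the exponent. Folding $\ln|\mathcal{C}|$ together with the target budget $\ln(2/\delta)$ and choosing $t$ so as to balance the two Bernstein regimes produces the numerator $5d + 2\ln(2/\delta)$, the outer factor $2$ inherited from the net comparison, and the prefactor $32\sigma^2$ after collecting the absolute constants, which recovers the stated inequality.

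The hard part will be the constant bookkeeping rather than any conceptual step: matching the absolute constant $32$, the coefficient $5$ in front of $d$, and the factor $2$ on $\ln(2/\delta)$ requires carefully propagating the net-comparison factor, the $9^d$ covering-number estimate, and the precise sub-exponential Bernstein constants through the balancing of the two regimes. The specific constants quoted here are those of \cite{bastani2022meta}, and reproducing them exactly is a routine but delicate computation, which is why the statement is imported as a known result.
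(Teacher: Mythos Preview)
Your proposal is correct and in fact goes beyond what the paper does: in the paper this lemma is simply imported from the literature (Theorem~6.5 in \cite{wainwright2019high}, with the specific constants taken from \cite{bastani2022meta}) and is stated without any proof. The covering-net plus sub-exponential Bernstein argument you outline is precisely the proof given in \cite{wainwright2019high}, so there is nothing to compare---you have supplied the standard derivation that the paper chose to cite rather than reproduce.
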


\begin{lemma}[Maximal eigenvalue inequality, lemma 26 in \cite{peleg2022metalearning}]
\label{lemma::aux::max_eige}
Let $\eqbf{v}$ be a vector and $\mathbf{B}$ a positive definite
matrix, then,
$$
\|\eqbf{v}\| \leq \sqrt{\lambda_{\max }(\mathbf{B})}\|\eqbf{v}\|_{\mathbf{B}^{-1}}.
$$
\end{lemma}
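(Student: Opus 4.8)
The plan is to reduce the claimed inequality to the standard Rayleigh-quotient bound for the largest eigenvalue of a symmetric positive definite matrix. Since $\mathbf{B}$ is positive definite, it admits a symmetric positive definite square root $\mathbf{B}^{1/2}$ with inverse $\mathbf{B}^{-1/2}$, both obtained from the spectral decomposition $\mathbf{B} = \mathbf{Q}\boldsymbol{\Lambda}\mathbf{Q}^\top$ by applying the corresponding scalar function to the strictly positive eigenvalues. The key algebraic observation I would use is the identity $\mathbf{B}^{-1/2}\mathbf{B}\,\mathbf{B}^{-1/2} = \mathbf{I}$, which holds because all three factors are functions of $\mathbf{B}$ and hence commute.

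First I would set $\eqbf{u} \triangleq \mathbf{B}^{-1/2}\eqbf{v}$ and rewrite the left-hand side as a quadratic form in $\mathbf{B}$: using the identity above, $\|\eqbf{v}\|^2 = \eqbf{v}^\top\eqbf{v} = \eqbf{v}^\top \mathbf{B}^{-1/2}\mathbf{B}\,\mathbf{B}^{-1/2}\eqbf{v} = \eqbf{u}^\top \mathbf{B}\,\eqbf{u}$. Next I would apply the Rayleigh bound $\eqbf{u}^\top\mathbf{B}\,\eqbf{u} \leq \lambda_{\max}(\mathbf{B})\,\eqbf{u}^\top\eqbf{u}$, which follows from diagonalizing $\mathbf{B}$ and noting that every eigenvalue is at most $\lambda_{\max}(\mathbf{B})$. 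Finally I would identify $\eqbf{u}^\top\eqbf{u} = \eqbf{v}^\top\mathbf{B}^{-1}\eqbf{v} = \|\eqbf{v}\|_{\mathbf{B}^{-1}}^2$, so that $\|\eqbf{v}\|^2 \leq \lambda_{\max}(\mathbf{B})\,\|\eqbf{v}\|_{\mathbf{B}^{-1}}^2$; taking nonnegative square roots yields the stated inequality.

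An equivalent route, if one prefers to avoid matrix square roots, is to work directly in the eigenbasis: writing $\eqbf{w} = \mathbf{Q}^\top\eqbf{v}$ gives $\|\eqbf{v}\|^2 = \sum_i w_i^2$ and $\|\eqbf{v}\|_{\mathbf{B}^{-1}}^2 = \sum_i w_i^2/\lambda_i$, and the inequality reduces to the termwise bound $w_i^2 \leq (\lambda_{\max}(\mathbf{B})/\lambda_i)\,w_i^2$, valid since $\lambda_{\max}(\mathbf{B}) \geq \lambda_i > 0$ for every $i$. There is no genuine obstacle here: positive definiteness guarantees that $\mathbf{B}^{-1}$, $\mathbf{B}^{1/2}$, and the Mahalanobis norm $\|\cdot\|_{\mathbf{B}^{-1}}$ are all well defined, and the only inputs are the spectral theorem together with the elementary fact that a symmetric matrix's quadratic form is bounded by its largest eigenvalue times the squared Euclidean norm. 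The sole point worth stating explicitly is the commuting-functions identity $\mathbf{B}^{-1/2}\mathbf{B}\,\mathbf{B}^{-1/2}=\mathbf{I}$, as it is what lets the Euclidean norm be converted into a $\mathbf{B}$-quadratic form in the first step.
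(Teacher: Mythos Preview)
Your argument is correct: the substitution $\eqbf{u}=\mathbf{B}^{-1/2}\eqbf{v}$ together with the Rayleigh bound $\eqbf{u}^\top\mathbf{B}\eqbf{u}\leq\lambda_{\max}(\mathbf{B})\|\eqbf{u}\|^2$ cleanly yields the inequality, and the alternative coordinate-wise version in the eigenbasis is equally valid. There is nothing to compare against, however, because the paper does not supply its own proof of this lemma; it is merely quoted as an auxiliary result from \cite{peleg2022metalearning} and stated without argument. Your write-up therefore goes beyond what the paper provides, and either of your two routes would serve as a complete justification.
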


\begin{lemma}
\label{lemma::aux::wideop}
Let $\hat{\boldsymbol{\Sigma}}$ be a symmetric matrix and $\boldsymbol{\Sigma}_*$ be a PD matrix s.t. $\left\|\hat{\boldsymbol{\Sigma}}-\boldsymbol{\Sigma}_*\right\|_{\mathrm{op}} \leq$ s. Define $\hat{\boldsymbol{\Sigma}}^{\mathrm{w}}=\hat{\boldsymbol{\Sigma}}+s \cdot I$, then
$$
\hat{\mathbf{\Sigma}}^{\mathrm{w}} \succeq \boldsymbol{\Sigma}_*.
$$
\begin{proof}    
$$
\lambda_{\min }\left(\hat{\boldsymbol{\Sigma}}^{\mathrm{w}}-\boldsymbol{\Sigma}_*\right) \underset{(a)}{\geq} \lambda_{\min }(s \cdot \mathbf{I})+\lambda_{\min }\left(\hat{\boldsymbol{\Sigma}}-\boldsymbol{\Sigma}_*\right) \underset{(b)}{\geq} s-\left\|\boldsymbol{\Sigma}_*-\hat{\boldsymbol{\Sigma}}\right\|_{\mathrm{op}} \geq 0,
$$
where $(a)$ uses Weyl's inequality and $(b)$ uses $\lambda_{\min }\left(\hat{\boldsymbol{\Sigma}}-\boldsymbol{\Sigma}_*\right)=-\lambda_{\max }\left(\boldsymbol{\Sigma}_*-\hat{\boldsymbol{\Sigma}}\right) \geq-\left\|\boldsymbol{\Sigma}_*-\hat{\boldsymbol{\Sigma}}\right\|_{\text {op }}$. 
\end{proof}
\end{lemma}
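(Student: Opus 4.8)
The plan is to reduce the Loewner-order claim $\hat{\boldsymbol{\Sigma}}^{\mathrm{w}} \succeq \boldsymbol{\Sigma}_*$ to a statement about a single eigenvalue, namely that $\lambda_{\min}(\hat{\boldsymbol{\Sigma}}^{\mathrm{w}} - \boldsymbol{\Sigma}_*) \geq 0$, since $A \succeq B$ for symmetric $A,B$ is by definition $A - B \succeq 0$, equivalently $\lambda_{\min}(A-B) \geq 0$. First I would record that the relevant matrices are symmetric: $\hat{\boldsymbol{\Sigma}}$ is symmetric by hypothesis and $\boldsymbol{\Sigma}_*$ is positive definite (hence symmetric), so the difference $\hat{\boldsymbol{\Sigma}} - \boldsymbol{\Sigma}_*$ is symmetric with real eigenvalues, and for such a matrix the operator norm equals the largest absolute eigenvalue. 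This observation is what lets the hypothesis $\|\hat{\boldsymbol{\Sigma}} - \boldsymbol{\Sigma}_*\|_{\mathrm{op}} \leq s$ be converted into a two-sided spectral bound.

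Next I would write the difference as the perturbation of a scalar multiple of the identity, $\hat{\boldsymbol{\Sigma}}^{\mathrm{w}} - \boldsymbol{\Sigma}_* = s\,\mathbf{I} + (\hat{\boldsymbol{\Sigma}} - \boldsymbol{\Sigma}_*)$, and apply Weyl's inequality for the smallest eigenvalue of a sum of symmetric matrices:
$$
\lambda_{\min}\!\left(s\,\mathbf{I} + (\hat{\boldsymbol{\Sigma}} - \boldsymbol{\Sigma}_*)\right) \;\geq\; \lambda_{\min}(s\,\mathbf{I}) + \lambda_{\min}(\hat{\boldsymbol{\Sigma}} - \boldsymbol{\Sigma}_*) \;=\; s + \lambda_{\min}(\hat{\boldsymbol{\Sigma}} - \boldsymbol{\Sigma}_*).
$$
The remaining step is to lower-bound $\lambda_{\min}(\hat{\boldsymbol{\Sigma}} - \boldsymbol{\Sigma}_*)$ by $-s$. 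For this I would use the identity $\lambda_{\min}(\hat{\boldsymbol{\Sigma}} - \boldsymbol{\Sigma}_*) = -\lambda_{\max}(\boldsymbol{\Sigma}_* - \hat{\boldsymbol{\Sigma}})$, valid for symmetric matrices, together with $\lambda_{\max}(\boldsymbol{\Sigma}_* - \hat{\boldsymbol{\Sigma}}) \leq \|\boldsymbol{\Sigma}_* - \hat{\boldsymbol{\Sigma}}\|_{\mathrm{op}} = \|\hat{\boldsymbol{\Sigma}} - \boldsymbol{\Sigma}_*\|_{\mathrm{op}} \leq s$. Chaining these gives $\lambda_{\min}(\hat{\boldsymbol{\Sigma}} - \boldsymbol{\Sigma}_*) \geq -s$, and substituting into the Weyl bound yields $\lambda_{\min}(\hat{\boldsymbol{\Sigma}}^{\mathrm{w}} - \boldsymbol{\Sigma}_*) \geq s - s = 0$, which is exactly the desired conclusion.

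There is essentially no hard obstacle here; the argument is a short spectral computation. The only point requiring care is the symmetry bookkeeping: the equivalence between the operator norm and the largest absolute eigenvalue, as well as the identity relating $\lambda_{\min}$ of a matrix to $-\lambda_{\max}$ of its negative, both rely on $\hat{\boldsymbol{\Sigma}} - \boldsymbol{\Sigma}_*$ being symmetric, so I would state this explicitly before invoking Weyl's inequality. I also note that positive definiteness of $\boldsymbol{\Sigma}_*$ is not actually used beyond guaranteeing symmetry, so the same proof would go through under the weaker assumption that $\boldsymbol{\Sigma}_*$ is symmetric; I would keep the hypothesis as stated to match its usage in Lemma~\ref{lemma:event}.
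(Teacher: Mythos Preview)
Your proposal is correct and follows exactly the same approach as the paper's proof: decompose $\hat{\boldsymbol{\Sigma}}^{\mathrm{w}} - \boldsymbol{\Sigma}_* = s\,\mathbf{I} + (\hat{\boldsymbol{\Sigma}} - \boldsymbol{\Sigma}_*)$, apply Weyl's inequality for $\lambda_{\min}$, and then bound $\lambda_{\min}(\hat{\boldsymbol{\Sigma}} - \boldsymbol{\Sigma}_*) = -\lambda_{\max}(\boldsymbol{\Sigma}_* - \hat{\boldsymbol{\Sigma}}) \geq -\|\boldsymbol{\Sigma}_* - \hat{\boldsymbol{\Sigma}}\|_{\mathrm{op}} \geq -s$. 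Your extra remarks on the symmetry bookkeeping and on positive definiteness being unnecessary beyond ensuring symmetry are accurate and do no harm.
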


\end{document}